\newtheorem{theorem}{Theorem}[section]
\newtheorem{proposition}[theorem]{Proposition}
\newtheorem{lemma}[theorem]{Lemma}
\newtheorem{remark}[theorem]{Remark}
\newcommand{\R}{{\mathbb R}}
\newcommand{\C}{{\mathcal{C}}}
\newcommand{\scs}{{\mathcal{S}}}
\newcommand{\K}{{\mathcal{K}}}
\newcommand{\F}{{\cal F}}
\newcommand{\X}{{\mathcal{X}}}
\newcounter{rcnt}[section]
\def\qt#1{\qquad\text{#1}}
\begin{document}

\title{Covering Numbers of $L_p$-balls of Convex Functions and Sets}  

\author{Adityanand Guntuboyina \thanks{Supported by NSF Grant
    DMS-1309356} \thanks{Adityanand Guntuboyina, 423 
    Evans Hall, Berkeley, CA - 94720. Email:
  \url{aditya@stat.berkeley.edu}} \\ Department of Statistics, University
  of California, Berkeley}  

\maketitle

\begin{abstract}
 We prove bounds for the covering numbers of classes of convex
 functions and convex sets in Euclidean space. Previous results
 require the underlying convex functions or sets to be uniformly
 bounded. We relax this assumption and replace it with weaker integral
 constraints. The existing results can be recovered as special cases
 of our results. 
\end{abstract}

\textbf{Keywords}: covering numbers, packing numbers, convex
functions, integral constraints, metric entropy, Komogorov
$\epsilon$-entropy. 

\textbf{Mathematics Subject Classification (2000)}: 41A46, 46B10,
52A10, 52A41, 54C70.

\section{Introduction}
For a subset $\F$ of a space $\X$ equipped 
with a pseudometric $\rho$, the $\epsilon$-covering number $M(\F,
\epsilon; \rho)$ is defined as the smallest number of closed balls of
radius $\epsilon$ whose union contains $\F$. The quantity $\log M(\F,
\epsilon; \rho)$ is referred to as the $\epsilon$-metric
entropy. This notion was introduced by A.N. Kolmogorov and is also
referred to as the Kolmogorov $\epsilon$-entropy. Covering numbers and
metric entropy provide an important measure of the massivity of $\F$
and play a central role in a number of areas including approximation
theory, empirical processes, nonparametric function estimation and
statistical learning theory.  

In this paper, we study the covering numbers of classes of convex
functions and classes of convex sets in Euclidean space. For classes
of convex functions, the best existing results are due
to~\citet{Dryanov} (for $d = 1$) and~\citet{GS13} (for $d \geq
1$) who proved optimal upper and lower bounds for the covering numbers
of uniformly bounded convex functions under $L^q$ metrics for $1 \leq
q < \infty$ (the definition of $L^q$ metrics is recalled
in~\eqref{lqd}). Specifically, they considered the class
$\C_{\infty}(I, B)$ of all 
convex functions on $I := [a_1, b_1] \times \dots \times [a_d, b_d]$
which are uniformly   bounded by $B$ and proved optimal upper and
lower bounds (upto 
multiplicative constants) for $\log M(\C_{\infty}(I, B), \epsilon;
L_q(I))$ for $1 \leq q < \infty$. These results can be seen as an
improvement over the classical results of~\citet{Bronshtein76} who
considered convex functions that are uniformly Lipschitz in addition
to being uniformly bounded. It may be noted that the result
of~\cite{Dryanov} was originally motivated by a queston posed by
A.I. Shnirelman. 

A natural question regarding the results of~\citet{Dryanov}
and~\citet{GS13} is whether the uniform boundedness assumption is
necessary for obtaining $L^q$ covering numbers on classes of convex
functions. We address this question in this paper and we show that
uniform boundedness is not necessary and it can be replaced by an
$L^p$ constraint for any $p > q$. Specifically, we consider, for $1
\leq p < \infty$, the class $\C_p(I, B)$ of all convex functions on $I
:= [a_1, b_1] \times \dots \times [a_d, b_d]$ which satisfy the integral
constraint $\int_I |f(x)|^p dx \leq B^p$ and we prove the following
interesting phenomenon for $\log M(\C_p(I, B),
\epsilon; L_q(I))$: for $1 \leq q < p \leq \infty$, the metric entropy
is finite and is bounded from above and below by constant multiples of
$\epsilon^{-d/2}$ while for $1 \leq p \leq q \leq \infty$, the metric
entropy is infinite. The results of~\citet{Dryanov} and~\citet{GS13} can
therefore be seen as special cases of our results corresponding to the
case when $p = \infty$. 

We also prove that, for the case when $1 \leq p = q < \infty$, the
metric entropy is barely infinite in the following sense: for every
subrectangle $J := [\alpha_1, \beta_1] \times \dots \times [\alpha_d,
\beta_d]$ of $I$ with $a_i < \alpha_i < \beta_i < b_i$ for $i = 1,
\dots, d$, the metric entropy $\log M(\C_p(I, B), \epsilon; L_p(J))$
is bounded from above by $\epsilon^{-d/2}$ upto multiplicative factors
that are logarithmic in the lengths $\alpha_i - a_i$  and $b_i -
\beta_i$ for $i = 1, \dots, d$. 

We also consider classes of convex sets. Here the main existing result
on covering numbers is due to~\citet{Bronshtein76} who considered the
class $\K_{\infty}(R)$ of all compact convex subsets of $\R^d$ (for
$d \geq 2$) that are contained in the ball of radius $R$ centered at
the origin. Under the Hausdorff metric $\ell_H$
(the definition of the Hausdorff metric is recalled
in~\eqref{hdef}),~\citet{Bronshtein76} proved bounds for the metric
entropy of $\K_{\infty}(R)$. Specifically,~\citet{Bronshtein76} proved
that $\log M(\K_{\infty}(R), \epsilon; \ell_H)$ is bounded from above
and below by constant multiples of $\epsilon^{(1-d)/2}$. A similar but
weaker result is proved in~\citet{Dudley74}. 

Using the notion of support function, the class $\K_{\infty}(R)$ can
be thought of as an $L_{\infty}$-ball in the class of all compact,
convex subsets of $\R^d$. The support function $h_K$ of a compact,
convex subset $K$ of $\R^d$ ($d \geq 2$) is defined for $u$ in the
unit sphere, $S^{d-1} := \left\{x \in \R^d: x_1^2 + \dots +   x_d^2 =
  1 \right\}$, by  
\begin{equation*}
  h_K(u) := \sup_{x \in K} (x \cdot u) \qt{where $x \cdot u := x_1 u_1
    + \dots + x_d u_d$}. 
\end{equation*}
Elementary properties of the support function can be found
in~\cite[Section 1.7]{Schneider} or~\cite[Section
13]{Rockafellar70book}. Using the support function, the class
$\K_{\infty}(R)$ can be written as $\left\{K
  \in \K : \sup_{u \in S^{d-1}} |h_K(u)| \le R \right\}$ where
$\K$ is the class of all compact, convex subsets of $\R^d$. A 
natural question now is to ask for covering numbers of the classes:
\begin{equation}\label{kpd}
  \K_{p}(R) := \left\{K \in \K : \int_{S^{d-1}} |h_K(u)|^p d\nu(u)
  \leq R^p \right\} \qt{for $1 \leq p < \infty$}
\end{equation}
where $\nu$ is the uniform probability measure on $S^{d-1}$. These
classes are all larger than $\K_{\infty}(R)$. In Theorem~\ref{sr} of
this paper, we prove that, for every $1 \leq p \leq \infty$, the
metric entropy $\log M(\K_p(R), \epsilon; \ell_H)$ is bounded from
above and below by constant multiples of $\epsilon^{(1-d)/2}$. 

The rest of the paper is organized as follows. We state our results
for the metric entropy of classes of convex functions in
Section~\ref{cf}. We prove these results in Section~\ref{pcf}. The
main idea behind our convex function results can be isolated into a
separate theorem which we state and prove in
Section~\ref{divisec}. Our results for convex sets are stated and
proved in Section~\ref{cc}.  The proof of an auxiliary result is given
in Section~\ref{apx}. 

\section{Convex Functions}~\label{cf}
Recall the notions of $\C_p(I, B)$ for $1 \leq p \leq \infty$, $I =
[a_1, b_1] \times \dots \times [a_d, b_d]$ and $B > 0$. Also recall
that under the $L_q(J)$ metric on a subset $J$ of $\R^d$, the distance
between two functions $f$ and $g$ on $J$ is defined as   
\begin{equation}\label{lqd}
  \left(\int_J |f(x) - g(x)|^q dx \right)^{1/q} \qt{for $1 \leq q < \infty$}
\end{equation}
and as $\sup_{x \in J} |f(x) - g(x)|$ for $q = \infty$. 

Guntuboyina and Sen~\cite[Theorem 3.1]{GS13} proved the
following result for the metric entropy of $\C_{\infty}(I, B)$ under
the $L_q(I)$ metric for $1 \leq q < \infty$. Dryanov~\cite{Dryanov}
previously proved the special case of this result for $d = 1$. 
\begin{theorem}[Guntuboyina and Sen]\label{IEEE}
Fix $d \geq 1$ and $1 \leq q < \infty$. There exist positive
constants $c_1, c_2$ and $\epsilon_0$ depending only on $d$ and $q$
such that for every $B > 0$ and $I = [a_1, b_1] \times \dots \times
[a_d, b_d]$, we have  
  \begin{equation}\label{IEEE.eq}
    \log M \left(\C_{\infty}(I, B), \epsilon; L_q(I) \right) \leq c_1
    \left(\frac{\epsilon}{B (b_1 - a_1)^{1/q} \dots (b_d - a_d)^{1/q}}
    \right)^{-d/2} 
  \end{equation}
for all $\epsilon > 0$ and 
\begin{equation} \label{IEEE.eq1}
  \log M \left(\C_{\infty}(I, B), \epsilon; L_q(I) \right) \geq c_2
  \left(\frac{\epsilon}{B (b_1 - a_1)^{1/q} \dots (b_d - a_d)^{1/q}}
  \right)^{-d/2} 
\end{equation}
whenever $0 < \epsilon \leq \epsilon_0 B (b_1 - a_1)^{1/q} \dots (b_d
- a_d)^{1/q}$. 
\end{theorem}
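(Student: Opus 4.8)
The first reduction I would make is to normalize. By replacing $f$ with $f(a_1 + (b_1-a_1)t_1, \dots, a_d + (b_d-a_d)t_d)/B$ one sees that $\C_\infty(I,B)$ under $L_q(I)$ is, up to the scaling factor $B(b_1-a_1)^{1/q}\cdots(b_d-a_d)^{1/q}$ in $\epsilon$, isometric to $\C_\infty([0,1]^d, 1)$ under $L_q([0,1]^d)$. So it suffices to prove both bounds for the unit cube with $B=1$, i.e.\ to show $\log M(\C_\infty([0,1]^d,1),\epsilon;L_q) \asymp \epsilon^{-d/2}$ for small $\epsilon$, with the upper bound holding for all $\epsilon>0$ (for $\epsilon$ bounded away from $0$ the class has bounded $L_q$-diameter, so the covering number is $1$ once $\epsilon$ exceeds that diameter, which is absorbed into $c_1$).

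For the upper bound \eqref{IEEE.eq}, the plan is to build an explicit $\epsilon$-net by a piecewise-polynomial (or piecewise-affine) approximation scheme adapted to convexity. A clean route: partition $[0,1]^d$ into $N^d$ congruent subcubes; on each subcube approximate the convex function by an affine function chosen from a suitably quantized finite family. The key quantitative input is that a convex function that is bounded by $1$ on a cube of side $h$ can be approximated in $L_q$ on that cube, by a single affine function, with error controlled by a power of $h$ — this uses the fact that the ``non-affine part'' of a convex function is controlled by its total Monge–Ampère mass / second differences, which in turn is dominated because the function is bounded. Summing the local errors over the $N^d$ subcubes and optimizing $N \asymp \epsilon^{-1}$ (roughly; the precise exponent comes from balancing approximation error $\sim N^{-?}$ against the $\log$ of the number of affine pieces) yields a net of log-cardinality $\asymp \epsilon^{-d/2}$. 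The delicate point is handling convex functions that are very steep near the boundary of $I$: a uniform Lipschitz bound fails, so one cannot naively quantize slopes. This is exactly where boundedness (rather than Lipschitzness, as in \citet{Bronshtein76}) must be exploited — one needs a covering of the ``boundary strips'' that uses the $L_q$ (not $L_\infty$) nature of the metric to tame the blow-up, e.g.\ by allowing coarser approximation on a thin shell and showing its $L_q$-contribution is still $O(\epsilon)$.

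For the lower bound \eqref{IEEE.eq1}, the plan is a standard Varshamov–Gilbert / volume argument: exhibit a family of $\exp(c\,\epsilon^{-d/2})$ convex functions in $\C_\infty([0,1]^d,1)$ that are pairwise $\epsilon$-separated in $L_q$. The construction: again partition the cube into $\asymp \epsilon^{-d/2}$ congruent subcubes of side $h \asymp \epsilon^{1/2}$ (in each coordinate), and on each subcube place a small ``bump'' — a nonnegative convex-compatible perturbation (a scaled negative cone, i.e.\ $-c(\text{dist to a point})$ truncated, or a small paraboloid cap) of height $\asymp h^2 \asymp \epsilon$ — with signs/inclusion chosen over a combinatorial code. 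One must check (i) each member is genuinely convex: this forces the perturbations to be subtracted from a fixed strongly convex ``envelope'' like $|x|^2$ so that the sum remains convex regardless of which bumps are switched on, (ii) each member is bounded by $1$: automatic since the envelope and the bumps are $O(1)$, (iii) two codewords differing in a constant fraction of coordinates are $\Omega(\epsilon)$ apart in $L_q$: each differing subcube contributes $\gtrsim (h^2)^q \cdot h^d = \epsilon^{q} \cdot \epsilon^{d/2}$ to the $q$-th power of the distance, and there are $\gtrsim \epsilon^{-d/2}$ of them, giving $L_q$ distance $\gtrsim (\epsilon^{q+d/2}\cdot\epsilon^{-d/2})^{1/q} = \epsilon$; then Varshamov–Gilbert supplies $\exp(\Omega(\epsilon^{-d/2}))$ such codewords.

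The main obstacle is the upper bound, specifically the non-uniform Lipschitz behavior near $\partial I$: a convex function bounded by $1$ on $I$ can have arbitrarily large gradient as one approaches the boundary, so the slope-quantization step in the $L_\infty$-style argument of \citet{Bronshtein76} does not apply directly. The resolution — decomposing $I$ into a bulk region (where a bound on the gradient \emph{is} available in terms of the distance to $\partial I$) plus a thin boundary shell (handled crudely but whose $L_q$-mass is negligible because its Lebesgue measure is small and $q<\infty$) — is the technical heart of the theorem, and it is precisely what makes $q=\infty$ genuinely different (indeed for $q=\infty$ the analogous statement is \emph{false} without a Lipschitz bound, consistent with the $p\le q$ infiniteness phenomenon discussed in the introduction).
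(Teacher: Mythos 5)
This theorem is not proved in the paper at all: it is imported from \citet{GS13} (and \citet{Dryanov} for $d=1$), and the only arguments the paper itself supplies are the two remarks following the statement, namely (i) the scaling reduction from a general rectangle to $[a,b]^d$ (equivalently $[0,1]^d$ with $B=1$) and (ii) the observation that the upper bound holds for all $\epsilon>0$ because the $L_q$-diameter of $\C_{\infty}(I,B)$ is at most $2B(b_1-a_1)^{1/q}\cdots(b_d-a_d)^{1/q}$. Your opening reduction reproduces both of these correctly, and your lower-bound construction (bumps of height $\asymp\epsilon$ on $\asymp\epsilon^{-d/2}$ subcubes of side $\asymp\epsilon^{1/2}$, subtracted from a strongly convex envelope, separated via Varshamov--Gilbert) is essentially the argument in \citet{GS13} and is sound modulo routine verification of convexity and the separation computation.

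The upper bound, however, is where all the content of the cited theorem lives, and there your proposal is a plan rather than a proof. The central quantitative claim --- that quantized piecewise-affine approximation on $N^d$ subcubes yields a net of log-cardinality $\asymp\epsilon^{-d/2}$ --- is asserted without the two computations that make it true: (a) a bound on the number of quantized affine functions needed per subcube (you must quantize gradients, and these are unbounded near $\partial I$, which is precisely the difficulty you flag but do not resolve), and (b) the summation showing that the total log-count over all subcubes, including the boundary shells where the approximation must be coarsened, is $O(\epsilon^{-d/2})$ with no extra logarithmic factor. Your stated balance $N\asymp\epsilon^{-1}$ is also off (affine approximation of a curvature-controlled convex function on a cube of side $h$ has error $\asymp h^2$, so one needs $N\asymp\epsilon^{-1/2}$ to get $N^d\asymp\epsilon^{-d/2}$), which signals that the exponent bookkeeping has not actually been done. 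The standard resolution in \citet{GS13} is to decompose $[0,1]^d$ into shells on which the function is Lipschitz with constant comparable to the inverse distance to the boundary, apply Bronshtein's entropy bound for Lipschitz convex functions on each shell, and verify that the resulting series converges; this convergence argument (the analogue of the computation carried out in Proposition~\ref{kred} of the present paper for the $L_p$-ball case) is the technical heart of the theorem and is missing from your sketch.
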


\begin{remark}
  In~\cite{GS13}, the above result was proved only for rectangles
  of the form $[a, b]^d$ as opposed to $[a_1, b_1] \times \dots
  \times [a_d, b_d]$. But it is easy to see that the result for $[a,
  b]^d$ implies Theorem~\ref{IEEE} by a scaling argument (for example,
  via inequality~\eqref{sci}). 
\end{remark}

\begin{remark}
 In~\cite{GS13}, inequality~\eqref{IEEE.eq} was  only proved for
 $\epsilon \leq \epsilon_0 B(b_1 - a_1)^{1/q} \dots 
  (b_q - a_q)^{1/q}$ for a positive constant $\epsilon_0$ depending
  only on $d$ and $q$. It turns out however that this condition is 
  redundant. This follows from the observation that the diameter of
  the space $\C_{\infty}(I, B)$ in the
  $L_q(I)$ metric is at most $2B(b_1 - a_1)^{1/q} \dots (b_d -
  a_d)^{1/q}$ which means that left hand side of~\eqref{IEEE.eq}
  equals $0$ for $\epsilon > 2B(b_1 - a_1)^{1/q} \dots
  (b_d-a_d)^{1/q}$.
\end{remark}

In this paper, we extend Theorem~\ref{IEEE} by proving bounds for the
metric entropy of $\C_{p}(I, B)$ for $1 \leq p <
\infty$. Note that functions in $\C_p(I, B)$ for $1 \leq p < \infty$
do not have to be uniformly bounded as in $\C_{\infty}(I, 
B)$ but instead they are only required to satisfy a weaker integral
constraint. We prove the following result for the metric entropy of
these classes under $L_q$ metrics: for $q <
p$, the metric entropy under the $L_q$ metric is finite and is bounded
from above by a constant multiple of $\epsilon^{-d/2}$ while for $q
\geq p$, the metric entropy under the $L_q$ metric is infinite. The
fact that the metric entropy is infinite when $q \geq p$ is shown in
Theorem~\ref{fin} while bounds on the metric entropy for $q < p$ are
proved in Theorem~\ref{mth}. It is clear that Theorem~\ref{IEEE} is a
special case of Theorem~\ref{mth} corresponding to $p = \infty$. 

\begin{theorem}\label{mth}
  Fix $d \geq 1$ and $1 \leq q < p \leq \infty$. There exist positive
  constants $c_1, c_2$ and $\epsilon_0$ depending only on $d$, $p$ and
  $q$ such that  
  \begin{equation}\label{mth.eq1}
    \log M \left(\C_p(I, B), \epsilon; L_q \right) \leq c_1 \left(\frac{\epsilon}{B
        (b_1 - a_1)^{1/q - 1/p} \dots (b_d - a_d)^{1/q - 1/p}}
    \right)^{-d/2}  
  \end{equation}
for every $\epsilon > 0$ and 
\begin{equation}
  \label{mth.eq2}
   \log M \left(\C_p(I, B), \epsilon; L_q \right) \geq c_2
   \left(\frac{\epsilon}{B (b_1 - a_1)^{1/q - 1/p} \dots (b_d -
       a_d)^{1/q - 1/p}} 
    \right)^{-d/2}  
\end{equation}
whenever $0 < \epsilon \leq \epsilon_0 B (b_1 - a_1)^{1/q - 1/p} \dots
(b_d - a_d)^{1/q - 1/p}$. 
\end{theorem}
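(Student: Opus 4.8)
For the lower bound \eqref{mth.eq2}, a scaling argument reduces the claim to the unit cube $I = [0,1]^d$, where it is enough to show that $\log M(\C_p([0,1]^d, B), \epsilon; L_q) \geq c_2(\epsilon/B)^{-d/2}$ for small $\epsilon$. Since a function bounded by $B$ on a set of unit volume automatically has $L_p$-norm at most $B$, we have the inclusion $\C_\infty([0,1]^d, B) \subseteq \C_p([0,1]^d, B)$, and hence $M(\C_p([0,1]^d, B), \epsilon; L_q) \geq M(\C_\infty([0,1]^d, B), \epsilon; L_q)$; the estimate now follows at once from the lower bound of Theorem~\ref{IEEE} applied with every $b_i - a_i = 1$.

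The content of the theorem is the upper bound \eqref{mth.eq1}. Again by scaling we may assume $I = [0,1]^d$ and must prove $\log M(\C_p([0,1]^d, B), \epsilon; L_q) \leq c_1(\epsilon/B)^{-d/2}$. When $\epsilon$ is not small there is nothing to do: because $q < p$, H\"older's inequality gives $\|f\|_{L_q([0,1]^d)} \leq \|f\|_{L_p([0,1]^d)} \leq B$ for all $f \in \C_p([0,1]^d, B)$, so this class has $L_q$-diameter at most $2B$ and is covered by a single ball once $\epsilon \geq 2B$. So assume $\epsilon$ small. The basic structural fact is that, although members of $\C_p([0,1]^d, B)$ need not be bounded, they can only be large near $\partial I$: combining Jensen's inequality applied to $f$ over a ball centred at a point $x$ (which bounds $f(x)$ from above) with an elementary convexity argument along line segments through $x$ (which bounds how negative $f(x)$ can be), one obtains a containment of the form $\{f|_J : f \in \C_p([0,1]^d, B)\} \subseteq \C_\infty(J, c_{d,p} B\, t^{-d/p})$ for every sub-box $J$ with $\mathrm{dist}(J, \partial I) \geq t$. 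In particular Theorem~\ref{IEEE} becomes applicable on any such $J$.

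The plan is then to decompose $[0,1]^d$ into a central box together with boundary layers at dyadic depths $t \sim 2^{-k}$, to cover the restriction of $\C_p([0,1]^d, B)$ to each piece by invoking Theorem~\ref{IEEE} with the uniform bound $\sim B\, 2^{kd/p}$ appropriate to the layer at depth $2^{-k}$, and then to distribute the total $L_q$-error $\epsilon$ across the pieces so as to minimise the sum of the resulting metric entropies; since $L_q^q$-distances add over a partition of $I$, the entropies add, and an optimisation of the error allocation is expected to return a bound of the advertised order $(\epsilon/B)^{-d/2}$. The delicate point — and the reason this argument is worth isolating as the division theorem of Section~\ref{divisec} — is that one must not cover the boundary layers by independently treating cubes of side $2^{-k}$: the number of such cubes grows like $2^{k(d-1)}$, and a naive sum of entropies then diverges when $d \geq 2$. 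Instead one has to use that all functions in $\C_p([0,1]^d, B)$ share a single global budget $\int_I |f|^p \leq B^p$, so that $f$ can be of size $\sim B\, 2^{kd/p}$ on only a small portion of the layer at depth $2^{-k}$; peeling the cube one coordinate at a time and recursing on the lower-dimensional faces is the mechanism that exploits this.

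I expect this recursive bookkeeping to be the main obstacle, and in particular the verification that the associated multi-scale geometric series converges — which is precisely the point at which the hypothesis $q < p$ is used, and which is also what makes the contribution of the boundary layers of the same order $(\epsilon/B)^{-d/2}$ as that of the central box. By contrast, the pointwise bound, the finite-diameter observation, and the two scaling reductions are routine. Finally, reinstating the side lengths through the initial scaling produces the factors $(b_i - a_i)^{1/q - 1/p}$ and yields \eqref{mth.eq1}.
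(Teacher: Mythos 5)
Your lower bound and your two reductions (scaling to $[0,1]^d$, the diameter bound for large $\epsilon$) are correct and match the paper. The gap is in the upper bound: you correctly identify the central difficulty --- that tiling the boundary layer at depth $2^{-k}$ by $\sim 2^{k(d-1)}$ cubes of side $2^{-k}$ makes the entropy sum diverge for $d\ge 2$ --- but you do not resolve it. The ``peeling one coordinate at a time and recursing on lower-dimensional faces'' is left entirely unexecuted, and you yourself flag it as the expected obstacle. A proof needs this step; as written, the argument stops exactly where the theorem's content begins.

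The paper's resolution rests on a point your structural lemma throws away. You record only the isotropic bound $|f|\le c_{d,p}B\,t^{-d/p}$ on boxes at distance $t$ from $\partial I$, whereas Lemma~\ref{gbr} gives the anisotropic, product-form bound $|\phi(y)|\le c\prod_{i=1}^d\max\bigl(y_i^{-1/p},(1-y_i)^{-1/p}\bigr)$: a function with $\int|\phi|^p\le 1$ can be of size $t^{-d/p}$ only near a \emph{corner}, while near the middle of a face (one small coordinate) it is only of size $t^{-1/p}$. Using your $t^{-d/p}$ bound on every cube in the $k$-th layer overestimates the sup-norm on most of them by a factor $2^{k(d-1)/p}$, and this is precisely what makes your layer-by-layer sum diverge. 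The paper instead partitions $[\eta,u]^d$ by a product grid $\prod_j[\eta_{i_j},\eta_{i_j+1}]$, applies Theorem~\ref{IEEE} on each cell with the product bound $\prod_j\eta_{i_j}^{-1/p}$, and optimizes the error allocation via Lemma~\ref{rul}; the resulting entropy sum then \emph{factorizes} as the $d$-th power of the one-dimensional sum $\sum_i(\eta_{i+1}-\eta_i)^{d/(2q+d)}\eta_i^{-dq/(p(2q+d))}$ (Theorem~\ref{divis}), which is bounded by a constant when $q<p$ --- indeed already for the dyadic choice $\eta_i=2^i\eta$, since the terms form a geometric series with positive exponent $d(p-q)/(p(2q+d))$. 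This factorization is the missing mechanism. A second, smaller omission: no finite partition into boxes on which the functions are bounded can reach $\partial I$, so the strip within distance $\eta_\epsilon$ of the boundary must be discarded outright; the paper does this with H\"older's inequality (Lemma~\ref{wlg1}), using $q<p$ to show that the $L_q$ mass of any $f\in\C_p$ on that thin strip is at most $\epsilon/2^{1/q}$. Your sketch never says where the dyadic layering stops or how the remainder is handled.
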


\begin{theorem}\label{fin}
  Fix $d \geq 1$ and $1 \leq p \leq q \leq \infty$. There exists a
  positive constant $\epsilon_0$ depending only on $d, p$ and $q$ such
  that 
  \begin{equation}\label{fin.eq}
    \log M(\C_p(I, B), \epsilon; L_q(I)) = \infty 
  \end{equation}
  whenever $\epsilon\leq B \epsilon_0 (b_1 - a_1)^{1/q - 1/p} \dots
  (b_d - a_d)^{1/q - 1/p}$. 
\end{theorem}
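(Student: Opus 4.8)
The plan is to construct, for every sufficiently small $\epsilon$, an \emph{infinite} family of functions in $\C_p(I,B)$ that are pairwise more than $2\epsilon$ apart in the $L_q(I)$ metric. Since a closed $L_q$-ball of radius $\epsilon$ has $L_q$-diameter at most $2\epsilon$, such a ball can contain at most one member of the family, so no finite collection of radius-$\epsilon$ balls can cover $\C_p(I,B)$; this is exactly the assertion $\log M(\C_p(I,B),\epsilon;L_q(I)) = \infty$. Throughout I write $\ell_i := b_i - a_i$ and interpret $1/q := 0$ when $q = \infty$.

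The family I would use consists of tall, thin ``affine spikes'' along the first coordinate. For $k = 0, 1, 2, \dots$ set $\delta_k := \ell_1 2^{-k}$ and
\begin{equation*}
g_k(x) \;:=\; h_k \max\!\Bigl(0,\ 1 - \frac{x_1 - a_1}{\delta_k}\Bigr), \qquad x \in I,
\end{equation*}
where $h_k > 0$ is chosen so that $\int_I |g_k|^p\,dx = B^p$; a one-line integration gives $h_k = B\,(p+1)^{1/p}\bigl(\delta_k \prod_{i=2}^{d}\ell_i\bigr)^{-1/p}$. First I would record that each $g_k$ is convex, being the pointwise maximum of an affine function and the constant $0$, so that $g_k \in \C_p(I,B)$ by the normalization. (These functions are of course \emph{not} uniformly bounded as $k \to \infty$, which is precisely why they escape $\C_\infty$.)

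Next I would bound $\|g_j - g_k\|_{L_q(I)}$ from below for $j \ne k$; say $j < k$. The key observation is that on the slab $S_j := \{x \in I : a_1 + \delta_j/2 \le x_1 \le a_1 + \delta_j\}$ the narrower spike $g_k$ vanishes identically (since $\delta_k \le \delta_j/2$ there), while $g_j$ agrees on $S_j$ with the affine function $h_j\bigl(1 - (x_1-a_1)/\delta_j\bigr)$, which runs from $h_j/2$ down to $0$. Integrating $|g_j|^q$ over $S_j$ (or, for $q = \infty$, evaluating $g_j$ at the face $x_1 = a_1 + \delta_j/2$) and substituting the value of $h_j$ leaves, once the powers of $\ell_1$ and $\delta_j$ are combined, an inequality of the form
\begin{equation*}
\|g_j - g_k\|_{L_q(I)} \;\ge\; c\,B\,\prod_{i=1}^{d} (b_i - a_i)^{1/q - 1/p}
\end{equation*}
with an explicit constant $c = c(p,q) > 0$. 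Taking $\epsilon_0 := c/3$ then finishes the proof via the reduction of the first paragraph.

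The step that genuinely uses the hypothesis $p \le q$ — and the one I would write out most carefully — is the exponent arithmetic in the last display. After substituting $h_j$ one is left with a factor $\delta_j^{\,1 - q/p} = \ell_1^{\,1 - q/p}\,2^{-j(1 - q/p)}$; because $1 - q/p \le 0$ the power of $2$ is at least $1$ and may simply be discarded, leaving exactly $\ell_1^{\,1 - q/p} = \ell_1^{\,q(1/q - 1/p)}$, hence a lower bound uniform in $j$. Were $q < p$ instead, this factor would decay geometrically in $j$, the family would cease to be uniformly separated, and indeed Theorem~\ref{mth} shows the entropy is then finite — so this is the decisive point, not a technicality. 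Two minor matters to dispatch are running the $q = \infty$ case in parallel with finite $q$ (replacing the slab integral by the half-height value of $g_j$), and noting that the construction only ever involves $x_1$, so the nominal $d$-dimensionality contributes nothing beyond the harmless factor $\prod_{i=2}^{d}\ell_i$ in $h_k$.
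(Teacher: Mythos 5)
Your proposal is correct and is essentially the paper's own argument: both construct a countable family of piecewise-affine spikes in the first coordinate, normalized to have $L^p$-norm exactly $B$, and show they are pairwise $L^q$-separated by a constant multiple of $B\prod_i(b_i-a_i)^{1/q-1/p}$, using $q\ge p$ exactly where you do (the paper's separation computation produces the factor $2^{j(q-p)/p}\ge 1$, which is your $2^{-j(1-q/p)}\ge 1$ in disguise). The only cosmetic difference is that the paper first rescales to $I=[0,1]^d$, $B=1$ via the identity~\eqref{sci}, whereas you carry the side lengths $\ell_i$ through explicitly.
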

When $1 \leq p = q < \infty$, it turns out that $\log M(\C_p(I, B),
\epsilon; L_p(I))$ is barely infinite in the sense 
made precise by the theorem below. Note that the dependence on $\eta$
in the next theorem 
is logarithmic. 

\begin{theorem}\label{peq}
  Fix $d \geq 1$ and $1 \leq p < \infty$. There exists a positive
  constant $c$ depending only on $d$ and $p$ such that for every $I =
  [a_1, b_1] \times \dots \times [a_d, b_d]$ and $J = [\alpha_1,
  \beta_1] \times \dots \times [\alpha_d, \beta_d]$ with $a_i <
  \alpha_i < \beta_i < b_i$ with $i = 1, \dots, d$, we have 
  \begin{equation*}
    \log M(\C_p(I, B), \epsilon; L_p(J))  \leq  c
    \left(\frac{\epsilon}{B}
    \right)^{-d/2} \left(\log \frac{1}{\eta} \right)^{d(2p+d)/(2p)} 
  \end{equation*}
  for all $\epsilon > 0$ where 
  \begin{equation}\label{edf}
    \eta :=\min \left(\frac{\alpha_1 -
        a_1}{b_1 - a_1}, \dots, \frac{\alpha_d - a_d}{b_d - a_d},
      \frac{b_1 - \beta_1}{b_1 - a_1}, \dots, \frac{b_d -
        \beta_d}{b_d - a_d}  \right). 
  \end{equation}
\end{theorem}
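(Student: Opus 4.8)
The plan is to reduce everything to Theorem~\ref{IEEE} by cutting $J$ into a controlled number of axis-parallel subrectangles on each of which the members of $\C_p(I,B)$ are \emph{uniformly} bounded, and then assembling the piecewise nets. The mechanism that makes this efficient is a trade-off: a convex $f$ with $\int_I|f|^p\le B^p$ can be large on $J$ (as large as a fixed power of $1/\eta$), but the sublevel-set structure of a convex function forces the region where it is large to be correspondingly small in volume, and on a region where $|f|$ has size $M$ and volume $\lesssim (B/M)^p$ the quantity $M\cdot(\text{volume})^{1/p}$ that governs the covering number in Theorem~\ref{IEEE} stays bounded by a constant multiple of $B$. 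Thus only the \emph{number} of pieces matters, and that number is only polylogarithmic in $1/\eta$.

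\emph{First ingredient: a pointwise bound.} I would establish that there is $C=C(d,p)$ with $|f(x)|\le C\,B\,|T_x|^{-1/p}$ for every $f\in\C_p(I,B)$ and every $x$ in the interior of $I$, where $T_x$ is the largest axis-parallel box centred at $x$ and contained in $I$, so that $|T_x|=2^d\prod_{r=1}^d\min(x_r-a_r,\,b_r-x_r)$. The estimate $f(x)\le B|T_x|^{-1/p}$ is immediate: $x$ is the centroid of $T_x$, so by convexity and Jensen $f(x)\le|T_x|^{-1}\int_{T_x}f$, and Hölder's inequality with $\int_{T_x}|f|^p\le B^p$ finishes it. The reverse bound $-f(x)\le CB|T_x|^{-1/p}$ is the only substantive analytic point and I expect it to be, or to rely on, the auxiliary result of Section~\ref{apx}: if $f(x)=-M$ then $\{f\le -M/2\}$ is a convex set of volume at most $(2B/M)^p$, hence of small width in at least one direction; crossing that width $f$ climbs by $M/2$, so it acquires a gradient of size $\gtrsim M/(\text{width})$, and by convexity it keeps climbing, reaching size $\gtrsim M$ again within a comparable distance. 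Unless that region runs into $\partial I$ it contributes more than $B^p$ to $\int_I|f|^p$ — and "running into $\partial I$" is precisely the inequality $M\lesssim B|T_x|^{-1/p}$. In one dimension this is a short direct computation; for $d\ge2$ one runs it for convex bodies, or slices along a suitable line through $x$.

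\emph{Second ingredient: a dyadic partition of $J$.} Put $N:=\lceil\log_2(1/\eta)\rceil$ and $L_r:=b_r-a_r$, $|I|:=\prod_rL_r$. For each multi-index $\vec\imath=(i_1,\dots,i_d)$ with $1\le i_r\le N+1$ and each of the $2^d$ choices of "which face of $I$", let $P_{\vec\imath}$ be the box of those $x$ with $\min(x_r-a_r,b_r-x_r)\in\bigl(2^{-i_r-1}L_r,\,2^{-i_r}L_r\bigr]$ for all $r$, the choice fixing for each $r$ which of the two faces $x_r$ is near. Every $x\in J$ has $\min(x_r-a_r,b_r-x_r)\ge\eta L_r$, so these at most $2^d(N+1)^d$ boxes (each a genuine subrectangle of $I$) cover $J$, and can be taken pairwise disjoint. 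On $P_{\vec\imath}$ one has $|T_x|>\prod_r2^{-i_r}L_r=|I|\,2^{-|\vec\imath|}$ with $|\vec\imath|:=\sum_r i_r$, so by the pointwise bound the restriction of any $f\in\C_p(I,B)$ to $P_{\vec\imath}$ lies in $\C_\infty\!\bigl(P_{\vec\imath},M_{\vec\imath}\bigr)$ with $M_{\vec\imath}:=CB\,2^{|\vec\imath|/p}|I|^{-1/p}$; and since $|P_{\vec\imath}|=2^{-d}|I|\,2^{-|\vec\imath|}$, the product $M_{\vec\imath}\,|P_{\vec\imath}|^{1/p}$ equals $2^{-d/p}CB$ for \emph{every} $\vec\imath$.

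\emph{Assembling the cover.} Let $Q\le 2^d(N+1)^d$ be the number of boxes and give each the accuracy $\epsilon_0:=\epsilon\,Q^{-1/p}$. By Theorem~\ref{IEEE} (applied with rectangle $P_{\vec\imath}$, sup-bound $M_{\vec\imath}$, exponent $q=p$, and $\epsilon$ unrestricted as in the remark following that theorem), $\C_\infty(P_{\vec\imath},M_{\vec\imath})$ has an $\epsilon_0$-net in $L_p(P_{\vec\imath})$ of log-cardinality at most $c_1\bigl(\epsilon_0/(M_{\vec\imath}|P_{\vec\imath}|^{1/p})\bigr)^{-d/2}=c_1\bigl(M_{\vec\imath}|P_{\vec\imath}|^{1/p}/\epsilon_0\bigr)^{d/2}\lesssim(BQ^{1/p}/\epsilon)^{d/2}$. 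Choosing one net element per box produces a function on $J$, and for given $f$ the matching choice $g$ obeys $\|f-g\|_{L_p(J)}^p=\sum_{\vec\imath}\|f-g\|_{L_p(P_{\vec\imath})}^p\le Q\epsilon_0^p=\epsilon^p$. Hence
\[
  \log M\bigl(\C_p(I,B),\epsilon;L_p(J)\bigr)\ \lesssim\ Q\cdot(BQ^{1/p}/\epsilon)^{d/2}\ =\ Q^{\,1+d/(2p)}(B/\epsilon)^{d/2}\ \lesssim\ (B/\epsilon)^{d/2}\,\bigl(\log(1/\eta)\bigr)^{d(2p+d)/(2p)},
\]
using $Q^{1+d/(2p)}\lesssim (N+1)^{d(2p+d)/(2p)}$ together with $N+1\lesssim\log(1/\eta)$ (valid since $\eta<\tfrac12$), and absorbing the $2^d$ and other $d,p$-factors into the constant. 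This is the bound asserted in Theorem~\ref{peq}. The hard part is the pointwise lower bound of the first ingredient; the remainder is the bookkeeping above plus the elementary checks that the $P_{\vec\imath}$ are subrectangles of $I$ and cover $J$.
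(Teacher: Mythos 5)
Your proposal is correct and follows essentially the same route as the paper: the pointwise envelope bound of your first ingredient is exactly Lemma~\ref{gbr} (after affine rescaling), the dyadic shells $P_{\vec\imath}$ reproduce the paper's choice $\eta_i = 2^i\eta$ in its Partitioning Theorem (Theorem~\ref{divis}) combined with the $2^d$-fold symmetry reduction of Lemma~\ref{wlg2}, and the per-box application of Theorem~\ref{IEEE} with $L_p$-additivity is the same assembly step (your equal accuracies $\epsilon_0=\epsilon Q^{-1/p}$ coincide, up to constants, with the paper's optimized ones because $M_{\vec\imath}|P_{\vec\imath}|^{1/p}$ is constant across boxes when $q=p$). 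The only caveat is that your sketched sublevel-set proof of the lower half of the pointwise bound is incomplete as written, but since you explicitly defer to the paper's auxiliary Lemma~\ref{gbr} for that input — just as the paper's own proof does — this does not affect the argument.
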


The main idea behind the proofs of Theorem~\ref{mth} and~\ref{peq} is
the following: scaling identities~\eqref{sci} and~\eqref{sci2}
described in Section~\ref{pcf} allow us to take $I = [0, 1]^d$. We
then show that functions in $\C_p([0, 1]^d, 1)$ are uniformly bounded
on subrectrangles that are contained in the interior of $[0, 1]^d$. We
divide $[0, 1]^d$ into such subrectangles and apply
Theorem~\ref{IEEE} in each of the subrectangles. The proofs are then
completed by combining the metric entropy bounds from
Theorem~\ref{IEEE} for each of the
subrectangles. This method, which we call the partitioning method, can be
isolated into a separate theorem (Theorem~\ref{divis}) which we state
and prove in the next section. The partitioning method is a
multivariate extension and simplification of the multistep
approximation procedure given in~\cite[Theorem 3.1]{Dryanov}. In
Section~\ref{pcf}, we show how Theorems~\ref{mth}
and~\ref{peq} can be proved from Theorem~\ref{divis}. We also provide
the proof of Theorem~\ref{fin} in Section~\ref{pcf}. 

\section{The Partitioning Theorem}\label{divisec}
\begin{theorem}\label{divis}
  Fix $1 \leq p < \infty$ and $1 \leq q < \infty$. There exists a
  constant $c$ depending only on $d$, $p$ and $q$ such that the
  following inequality is true for every $0 < \eta < u \leq 1/2$,  $l
  \geq 1$ and every finite sequence $\eta = \eta_0 < \eta_1 < \dots <
  \eta_l < u \leq \eta_{l+1}$:  
  \begin{equation*}
    \log M(\C_p([0, 1]^d, 1), \epsilon, L_q[\eta, u]^d) \leq c
    \epsilon^{-d/2} \left(\sum_{i=0}^l \frac{(\eta_{i+1} -
        \eta_i)^{d/(2q+d)}}{\eta_i^{dq/(p(2q+d))}}
    \right)^{d(2q+d)/(2q)}
  \end{equation*}
for all $\epsilon > 0$. 
\end{theorem}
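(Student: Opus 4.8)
The plan is to cut $[\eta, u]^d$ into axis-parallel boxes on which the members of $\C_p([0,1]^d,1)$ --- written $\C_p$ for short in this proof --- are uniformly bounded, to estimate each box's covering number by Theorem~\ref{IEEE}, and to reassemble. Since $[\eta, u] \subseteq [\eta_0, \eta_{l+1}] = \bigcup_{k=0}^{l}[\eta_k, \eta_{k+1}]$, for each multi-index $\mathbf{k} = (k_1, \dots, k_d) \in \{0, \dots, l\}^d$ the box $C_{\mathbf{k}} := \prod_{j=1}^d [\eta_{k_j}, \eta_{k_j+1}]$ is defined, and these finitely many boxes cover $[\eta, u]^d$ and overlap only on sets of Lebesgue measure zero. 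The crucial ingredient is a \emph{product-form} interior bound: for every $f \in \C_p$ and every $x \in C_{\mathbf{k}}$,
\[
  |f(x)| \;\le\; C(d,p)\Bigl(\prod_{j=1}^d \eta_{k_j}\Bigr)^{-1/p} \;=:\; B_{\mathbf{k}} .
\]
To see this, note that $\eta_{k_j+1} \le u \le 1/2$ forces $x_j \ge \eta_{k_j}$ and $1 - x_j \ge \eta_{k_j}$ for every $j$, so $x$ is the centre of an axis-parallel box $Q_x \subseteq [0,1]^d$ with $\mathrm{vol}(Q_x) \ge 2^d\prod_j \eta_{k_j}$. Averaging a supporting affine minorant of $f$ at $x$ over the centrally symmetric box $Q_x$ and applying H\"older's inequality gives $f(x) \le \mathrm{vol}(Q_x)^{-1/p}\|f\|_{L_p(Q_x)} \le B_{\mathbf{k}}$ after adjusting the constant; the matching lower bound $f(x) \ge -B_{\mathbf{k}}$ comes from a short convexity argument --- if $f(x) = -M$ with $M$ larger than the upper bound just established on a fixed sub-box of $Q_x$, convexity propagates the value down to $f \le -M/2$ on a box of volume comparable to $\prod_j \eta_{k_j}$, and $\int_{[0,1]^d}|f|^p \le 1$ then caps $M$. (This is the interior boundedness of $\C_p$, in its sharper product form; compare the auxiliary estimate of Section~\ref{apx}.) In particular restriction to $C_{\mathbf{k}}$ maps $\C_p$ into $\C_\infty(C_{\mathbf{k}}, B_{\mathbf{k}})$.

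Next I would invoke Theorem~\ref{IEEE} on each box. Writing $\delta_k := \eta_{k+1}-\eta_k$, the box $C_{\mathbf{k}}$ has side lengths $\delta_{k_1}, \dots, \delta_{k_d}$, so for every $\epsilon_{\mathbf{k}} > 0$
\[
  \log M\bigl(\C_p, \epsilon_{\mathbf{k}}; L_q(C_{\mathbf{k}})\bigr)
  \;\le\; \log M\bigl(\C_\infty(C_{\mathbf{k}}, B_{\mathbf{k}}), \epsilon_{\mathbf{k}}; L_q(C_{\mathbf{k}})\bigr)
  \;\le\; c_1\, B_{\mathbf{k}}^{d/2}\,\epsilon_{\mathbf{k}}^{-d/2}\prod_{j=1}^d \delta_{k_j}^{\,d/(2q)} .
\]
Since $B_{\mathbf{k}}^{d/2} = C(d,p)^{d/2}\prod_j \eta_{k_j}^{-d/(2p)}$, the right-hand side \emph{factorises over the coordinates} as $c'\,\epsilon_{\mathbf{k}}^{-d/2}\prod_{j=1}^d g(k_j)$, where $g(k) := \eta_k^{-d/(2p)}\delta_k^{\,d/(2q)}$ and $c' := c_1 C(d,p)^{d/2}$. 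By the routine product construction for coverings --- glue the individual $\epsilon_{\mathbf{k}}$-covers along a measurable partition of $[\eta,u]^d$ subordinate to the boxes --- one has
\[
  \log M\bigl(\C_p, \epsilon; L_q[\eta,u]^d\bigr) \;\le\; \sum_{\mathbf{k}} \log M\bigl(\C_p, \epsilon_{\mathbf{k}}; L_q(C_{\mathbf{k}})\bigr)
\]
whenever $\sum_{\mathbf{k}} \epsilon_{\mathbf{k}}^q \le \epsilon^q$.

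Finally I would optimise the allocation of the radius among the boxes. With $a_{\mathbf{k}} := c'\prod_j g(k_j)$ and $\epsilon_{\mathbf{k}} := \epsilon\, a_{\mathbf{k}}^{2/(2q+d)}\bigl(\sum_{\mathbf{m}} a_{\mathbf{m}}^{2q/(2q+d)}\bigr)^{-1/q}$, which satisfies $\sum_{\mathbf{k}}\epsilon_{\mathbf{k}}^q = \epsilon^q$, the previous two displays combine to
\[
  \log M\bigl(\C_p, \epsilon; L_q[\eta,u]^d\bigr) \;\le\; \sum_{\mathbf{k}} a_{\mathbf{k}}\,\epsilon_{\mathbf{k}}^{-d/2} \;=\; \epsilon^{-d/2}\Bigl(\sum_{\mathbf{k}} a_{\mathbf{k}}^{2q/(2q+d)}\Bigr)^{(2q+d)/(2q)} .
\]
The inner sum again factorises, $\sum_{\mathbf{k}} a_{\mathbf{k}}^{2q/(2q+d)} = (c')^{2q/(2q+d)}\bigl(\sum_{k=0}^l g(k)^{2q/(2q+d)}\bigr)^d$, and $g(k)^{2q/(2q+d)} = \eta_k^{-dq/(p(2q+d))}\delta_k^{\,d/(2q+d)}$; substituting, and collecting all constants into a single $c = c(d,p,q)$, yields exactly the asserted inequality, valid for all $\epsilon > 0$ because the upper bound of Theorem~\ref{IEEE} is. The one genuinely nonroutine step is the product-form interior bound, and in particular its lower half: it is precisely this form (rather than the cruder $|f| \le C(\min_j \eta_{k_j})^{-d/p}$ obtained from an inscribed ball) that lets the per-box estimate split across coordinates and so produces the exponent $dq/(p(2q+d))$ on $\eta_k$ rather than one $d$ times larger.
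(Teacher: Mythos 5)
Your proof is correct and follows essentially the same route as the paper's: partition $[\eta,u]^d$ into the boxes $\prod_{j}[\eta_{k_j},\eta_{k_j+1}]$, bound each restriction by the product-form interior estimate $|f|\le C\prod_j\eta_{k_j}^{-1/p}$, apply Theorem~\ref{IEEE} box by box, combine via subadditivity of the log-covering numbers, and optimize the radius allocation exactly as you do. The product-form bound you sketch is precisely the paper's Lemma~\ref{gbr} (proved in the appendix via polar coordinates and a one-dimensional convexity lemma), and your supporting-hyperplane averaging plus convexity-propagation derivation of it is a valid alternative.
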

We need two preparatory results for the proof of
Theorem~\ref{divis}. The first of these results is given below. Its
proof is trivial and is omitted.   
\begin{lemma}\label{rul}
  Let $\F$ be an arbitrary class of functions defined on a subset $A$
  of $\R^d$ and let $A_1 \dots, A_k$ denote subsets of $\R^d$ with
  $A \subseteq \cup_{i=1}^k A_i$. Then for every $\epsilon,
  \epsilon_1, \dots,  \epsilon_k > 0$, we have  
  \begin{equation*}
    \log M(\F, \epsilon; L_q(A)) \leq \sum_{i=1}^k \log M(\F,
    \epsilon_i, L_q(A_i)) 
  \end{equation*}
provided $\sum_{i=1}^k \epsilon_i^q \leq \epsilon^q$. 
\end{lemma}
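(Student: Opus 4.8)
The plan is to build an explicit $\epsilon$-cover of $\F$ in the $L_q(A)$ metric by gluing together optimal covers of the individual pieces. The only structural wrinkle is that the sets $A_i$ may overlap, so I would first disjointify them: set $A_1' := A_1 \cap A$ and $A_i' := (A_i \cap A) \setminus (A_1 \cup \dots \cup A_{i-1})$ for $i \geq 2$. The $A_i'$ are then pairwise disjoint, their union is exactly $A$ (because $A \subseteq \cup_i A_i$), and $A_i' \subseteq A_i$. Since $q < \infty$, the $L_q$ distance is additive across this partition: for any two functions $f, g$ on $A$,
\[
  \|f - g\|_{L_q(A)}^q = \int_A |f-g|^q = \sum_{i=1}^k \int_{A_i'} |f-g|^q = \sum_{i=1}^k \|f-g\|_{L_q(A_i')}^q ,
\]
and moreover $\|f-g\|_{L_q(A_i')} \leq \|f-g\|_{L_q(A_i)}$ because $A_i' \subseteq A_i$ and the integrand is nonnegative.

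Next, for each $i$ I would fix an optimal $\epsilon_i$-cover of $\F$ under $L_q(A_i)$, given by centers $c_{i,1}, \dots, c_{i,N_i}$ where $N_i := M(\F, \epsilon_i; L_q(A_i))$. These centers are arbitrary elements of the ambient function space, not necessarily members of $\F$, which is exactly what lets the argument proceed without any loss of constants. Given $f \in \F$, for each $i$ I would choose an index $j_i = j_i(f)$ with $\|f - c_{i,j_i}\|_{L_q(A_i)} \leq \epsilon_i$, and then form the glued center $g$ on $A$ defined by $g := c_{i,j_i}$ on $A_i'$; this is well defined precisely because the $A_i'$ partition $A$. Combining the additivity identity with $A_i' \subseteq A_i$ and the covering property yields
\[
  \|f - g\|_{L_q(A)}^q = \sum_{i=1}^k \|f - c_{i,j_i}\|_{L_q(A_i')}^q \leq \sum_{i=1}^k \|f - c_{i,j_i}\|_{L_q(A_i)}^q \leq \sum_{i=1}^k \epsilon_i^q \leq \epsilon^q ,
\]
so $f$ lies within $L_q(A)$-distance $\epsilon$ of the glued center $g$.

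Finally, each glued center is determined by its index tuple $(j_1, \dots, j_k) \in \{1, \dots, N_1\} \times \dots \times \{1, \dots, N_k\}$, so there are at most $\prod_{i=1}^k N_i$ of them, and the displayed inequality shows they form an $\epsilon$-cover of $\F$ in $L_q(A)$. Hence $M(\F, \epsilon; L_q(A)) \leq \prod_{i=1}^k N_i$, and taking logarithms gives the claimed bound. The proof is essentially automatic once two observations are in place, and these are the only points that require any care: the disjointification that turns the covering of $A$ into a genuine partition, and the fact that covering balls may be centered anywhere in the ambient space, so the per-piece centers can be pasted together freely. Had one instead insisted on choosing representatives inside $\F$, a spurious factor of two would appear; gluing arbitrary centers avoids it entirely.
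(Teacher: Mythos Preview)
Your argument is correct and is exactly the standard disjointify-and-glue construction one would supply here; the paper itself deems the lemma trivial and omits the proof entirely, so there is nothing to compare against beyond noting that your write-up is a faithful fleshing-out of the intended reasoning.
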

The second preparatory result states that for every $\phi \in \C_p([0,
1]^d, 1)$ and $y \in (0, 1)^d$, the quantity $|\phi(y)|$ can be
bounded from above by a term that is independent of $\phi$. The
precise statement is given below and its proof is deferred to
Section~\ref{apx}.  
. 
\begin{lemma}\label{gbr}
Let $1 \leq p \leq \infty$ and let $\phi$ be a convex function on
$[0,1]^d$ with $\int_{[0, 1]^d} |\phi(x)|^p dx \leq 1$. Then there
exists a positive constant $c$ depending only on $d$ and $p$ such that 
for every $y = (y_1, \dots, y_d) \in (0, 1)^d$,
\begin{equation}\label{gbr.eq}
  |\phi(y)| \leq c \prod_{i=1}^d \max
  \left(y_i^{-1/p}, (1 -  y_i)^{-1/p} \right). 
\end{equation}
\end{lemma}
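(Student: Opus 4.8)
The plan is to reduce the statement to the one‑dimensional case by induction on $d$, and to prove the $d=1$ case by a direct convexity argument split according to the sign of $\phi(y)$. The case $p=\infty$ is trivial: then $|\phi(y)|\le 1$ while the right‑hand side of~\eqref{gbr.eq} is at least $1$, so assume $1\le p<\infty$. Throughout one may assume $\phi$ is continuous on $[0,1]^d$, since a convex function is automatically continuous on the open cube and the statement involves only interior points $y$ and the integral of $|\phi|^p$, neither of which sees the boundary values (one may, e.g., pass to the lower semicontinuous hull of $\phi$).

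For $d=1$, fix $y\in(0,1)$; reflecting $[0,1]$ if necessary we may assume $y\le 1/2$, and then it suffices to prove $|\phi(y)|\le c_p\,y^{-1/p}$ for a constant $c_p$ depending only on $p$. If $\phi(y)=:v\ge 0$, then for $0<t\le y$ we have $y\pm t\in[0,1]$ and, by convexity, $v\le\tfrac12(\phi(y-t)+\phi(y+t))$, whence $v^p\le|\phi(y-t)|^p+|\phi(y+t)|^p$; integrating over $t\in(0,y)$ gives $v^p y\le\int_0^{2y}|\phi|^p\le 1$, so $v\le y^{-1/p}$. If $\phi(y)=-M$ with $M>0$, let $(a,b)$ be the connected component of $\{x\in[0,1]:\phi(x)<0\}$ containing $y$; then $-\phi$ is nonnegative and concave on $[a,b]$, and $\phi$ vanishes at any endpoint of $[a,b]$ lying in $(0,1)$. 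If $b-a\ge y/2$, then since $-\phi$ is concave, nonnegative at $a$ and $b$, and equal to $M$ at $y$, it lies above the piecewise‑linear tent through $(a,0)$, $(y,M)$, $(b,0)$, so $\int_a^b|\phi|^p\ge M^p(b-a)/(p+1)\ge M^p y/(2(p+1))$, giving $M\le(2(p+1))^{1/p}y^{-1/p}$. If instead $b-a<y/2$, then $a>y/2>0$ and $b<3y/2<1$, so $\phi(a)=\phi(b)=0$; convexity gives $\phi'_-(b)\ge(\phi(b)-\phi(y))/(b-y)=M/(b-y)>2M/y$ (using $b-y<b-a<y/2$), hence $\phi(x)\ge(2M/y)(x-b)$ for $x\in[b,1]$, and since $1-b>1/4$ we get $\int_b^1|\phi|^p\ge(2M/y)^p(1-b)^{p+1}/(p+1)>(2M/y)^p 4^{-(p+1)}/(p+1)$; together with $y\le 1/2$ this again yields $M\le c_p\,y^{-1/p}$. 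Combining the two cases proves the $d=1$ statement.

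For the inductive step, suppose the result holds in dimension $d-1$ and let $\phi$ be convex on $[0,1]^d$ with $\int_{[0,1]^d}|\phi|^p\le 1$. Write points as $(z,t)$ with $z\in[0,1]^{d-1}$, $t\in[0,1]$, and put $g(z):=\int_0^1|\phi(z,t)|^p\,dt$, so $\int_{[0,1]^{d-1}}g\le 1$ by Fubini. For each $z$ the map $t\mapsto\phi(z,t)$ is convex, so the homogeneous form of the $d=1$ bound gives $|\phi(z,y_d)|\le c_1\,g(z)^{1/p}\max(y_d^{-1/p},(1-y_d)^{-1/p})$. The function $h(z):=\phi(z,y_d)$ is convex on $[0,1]^{d-1}$ (restriction of $\phi$ to an affine slice), and integrating the last inequality gives $\int_{[0,1]^{d-1}}|h|^p\le c_1^p\max(y_d^{-1},(1-y_d)^{-1})$. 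Applying the induction hypothesis (again in homogeneous form) to $h$ at $(y_1,\dots,y_{d-1})$ yields $|\phi(y)|=|h(y_1,\dots,y_{d-1})|\le c_{d-1}c_1\prod_{i=1}^d\max(y_i^{-1/p},(1-y_i)^{-1/p})$, which is~\eqref{gbr.eq} with $c=c_1c_{d-1}$; in particular $c$ depends only on $d$ and $p$.

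The main obstacle is the negative case in dimension one: convexity alone does not forbid a very sharp downward spike at $y$, so the pointwise lower bound on $\phi$ has to be extracted from the interplay of convexity with the $L^p$ constraint — either the region where $\phi<0$ is wide enough that the spike itself makes $\int|\phi|^p$ large, or else $\phi$ must climb out of it steeply and become large and positive, again forcing $\int|\phi|^p$ to be large. Pinning down the constants uniformly, in particular when $y$ is near $1/2$ or when the negative component abuts the boundary of $[0,1]$, is the only genuinely fiddly point; the positive case and the dimensional induction are routine.
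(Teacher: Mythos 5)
Your proof is correct, but it proceeds quite differently from the paper's. The paper fixes $y$ with (after reflection) $y_i \ge 1/2$, passes to polar coordinates centered at $y$ over the corner box $[y_1,1]\times\dots\times[y_d,1]$, and applies a single weighted one-dimensional lemma (Lemma~\ref{fidi}, with the weight $x^{\alpha-1}$ and the constant $C(\alpha,p)=\inf_{0\le\beta\le1}\int_0^1 u^{\alpha-1}|u-\beta|^p\,du$) to each convex radial section $r\mapsto\phi(y+r\omega)$, then recombines over angles to recover the volume $(1-y_1)\cdots(1-y_d)$. You instead tensorize: a direct $d=1$ argument with an explicit case split on the sign of $\phi(y)$ (symmetric averaging over pairs $y\pm t$ when $\phi(y)\ge0$; the tent-function versus steep-escape dichotomy when $\phi(y)<0$), followed by an induction on $d$ via Fubini, using the homogeneous 1D bound fiberwise to control $\int|\phi(\cdot,y_d)|^p$ and then invoking the $(d-1)$-dimensional statement for the convex slice. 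Your route avoids polar coordinates and Jacobians entirely, makes the product structure of the right-hand side of~\eqref{gbr.eq} transparent, and yields the clean constant $c_d=c_1^d$; the paper's route treats all coordinates at once through one radial lemma. A further point in your favor: your 1D argument explicitly handles both signs of $\phi(y)$, whereas Lemma~\ref{fidi} is stated (and its proof only works) under $f(0)<0$ — indeed $f(x)=M\max(0,1-x/\delta)$ shows the conclusion of that lemma fails for $f(0)>0$ — so the case $\phi(y)>0$ in the paper's proof implicitly requires exactly the kind of two-sided averaging you carry out. Your appeal to continuity up to the boundary (via the closure of $\phi$, which changes nothing on the interior or in the integral) and the measurability of $g$ in the Fubini step are routine and correctly dispatched.
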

We are now ready to prove Theorem~\ref{divis}. 
\begin{proof}[Proof of Theorem~\ref{divis}]
  Let us fix $0 < \eta < u \leq 1/2$ and an arbitrary finite sequence
$\eta = \eta_0 < \eta_1 < \dots < \eta_l < u \leq \eta_{l+1}$ for a
positive integer $l \geq 1$. For every $f$ and $g$, we can write 
  \begin{equation*}
    \int_{[\eta, u]^d} \left|f(x) - g(x) \right|^q dx \leq 
    \sum_{i_1=0}^{l} \dots \sum_{i_d = 0}^l
    \int_{\eta_{i_1}}^{\eta_{i_1+1}} \dots
    \int_{\eta_{i_d}}^{\eta_{i_d+1}}  
    \left|f(x) - g(x) \right|^q dx_1 \dots dx_d. 
  \end{equation*}
Lemma~\ref{gbr} asserts that every function $\phi \in \C_p([0,
1]^d, 1)$, when restricted to the rectangle $[\eta_{i_1}, \eta_{i_1+1}]  \times
\dots \times [\eta_{i_d}, \eta_{i_d+1}]$, is convex and uniformly
bounded by $C \eta_{i_1}^{-1/p} \dots \eta_{i_d}^{-1/p}$ for a
constant $C$ that only depends on $d$ and $p$. Therefore,
by Theorem~\ref{IEEE}, we can cover the restrictions of functions in
$\C_p([0, 1]^d, 1)$ to $[\eta_{i_1}, \eta_{i_1+1}]  \times \dots \times
[\eta_{i_d}, \eta_{i_d+1}]$ to within a positive real number
$\alpha(i_1, \dots, i_d)$ in the $L_q$ metric on $[\eta_{i_1},
\eta_{i_1+1}]  \times \dots \times [\eta_{i_d}, \eta_{i_d+1}]$  
by a finite set having cardinality at most  
\begin{equation*}
  \exp \left(c \left(\frac{\alpha(i_1, \dots, i_d) \eta_{i_1}^{1/p}
        \dots \eta_{i_d}^{1/p}}{(\eta_{i_1+1} - \eta_{i_1})^{1/q}
        \dots (\eta_{i_d+1} - \eta_{i_d})^{1/q}} \right)^{-d/2}
  \right)
\end{equation*}
where $c$ is a positive constant that only depends on $d$, $p$ and
$q$. By Lemma~\ref{rul} therefore, we get an $\epsilon$-cover for
functions in $\C_p([0, 1]^d, 1)$ in the $L_q$ metric on $[\eta, u]^d$
with   
\begin{equation*}
  \epsilon^q = \sum_{i_1 = 0}^l \dots \sum_{i_d = 0}^l
  \alpha^q(i_1, \dots, i_d)
\end{equation*}
having cardinality at most
\begin{equation}\label{beem}
  \exp \left[c \sum_{i_1=0}^l \dots \sum_{i_d = 0}^l
    \left(\frac{\alpha(i_1, \dots, i_d) \eta_{i_1}^{1/p} \dots
        \eta_{i_d}^{1/p}}{(\eta_{i_1+1} - \eta_{i_1})^{1/q} \dots
        (\eta_{i_d+1} - \eta_{i_d})^{1/q}} \right)^{-d/2} \right]. 
\end{equation}
For each $i := (i_1, \dots, i_d) \in \{0, \dots, l\}^{d}$, let  
\begin{equation*}
  u_i  := \frac{\eta_{i_1}^{1/p} \dots
    \eta_{i_d}^{1/p}}{(\eta_{i_1 + 1} - \eta_{i_1})^{1/q} \dots
    (\eta_{i_d+1} - \eta_{i_d})^{1/q}}.  
\end{equation*}  
Plugging in the choice
\begin{equation*}
  \alpha(i_1, \dots, i_d) := \epsilon u_i^{-d/(d+2q)} \left(\sum_{i
      \in \{0, \dots, l\}^d} u_i^{-dq/(d+2q)} \right)^{-1/q} 
\end{equation*}
into~\eqref{beem}, we obtain that 
\begin{equation*}
  \log M \left(\epsilon, \C_p([0, 1]^d, 1), L_q[\eta, u]^d \right)
  \leq c \epsilon^{-d/2} \left(\sum_{i} u_i^{-dq/(2q+d)}
  \right)^{(2q+d)/(2q)} . 
\end{equation*}
The observation
\begin{equation*}
  \sum_{i \in \{0, \dots, l\}^d} u_i^{-dq/(2q+d)} = \left(\sum_{i=0}^l
    \frac{(\eta_{i + 1} - \eta_i)^{d/(2q+d)}}{\eta_i^{dq/(p(2q+d))}} \right)^d . 
\end{equation*}
now completes the proof.
\end{proof}

\section{Proofs for results in Section~\ref{cf}}\label{pcf}
We give the proofs of Theorems~\ref{mth},~\ref{fin} and~\ref{peq} in
this section. We start with a pair of simple scaling identities which
allow us to take $I = [0, 1]^d$ without loss of generality. The first
identity is: For every $I = [a_1, b_1] \times \dots \times [a_d,
b_d]$, we have
\begin{equation}\label{sci}
  M(\C_p(I, B), \epsilon; L_q(I)) = M(\C_p([0, 1]^d, 1),
  \tilde{\epsilon}, L_q([0, 1]^d))
\end{equation}
where 
\begin{equation*}
  \tilde{\epsilon} := (b_1 - a_1)^{1/p - 1/q} \dots (b_d - a_d)^{1/p -
  1/q} \frac{\epsilon}{B}. 
\end{equation*}
To see~\eqref{sci}, associate for each $f \in \C_p(I, B)$, the
function $\tilde{f}$ on $[0, 1]^d$ by  
   \begin{equation*}
     \tilde{f}(x_1, \dots, x_d) := B^{-1} (b_1 - a_1)^{1/p} \dots (b_d
     - a_d)^{1/p} f(a_1 + (b_1 - a_1) x_1, \dots, a_d 
     + (b_d - a_d)x_d). 
   \end{equation*}
   It is then easy to verify that $\tilde{f}$ lies in $\C_{p}([0, 1]^d,
   1)$ and that covering $\tilde{f}$ to within $\tilde{\epsilon}$ in
   the $L_{q}$ metric on $[0, 1]^d$ is equivalent to covering $f$ to
   within $\epsilon$ in the $L_q$ metric on $I$ and this
   proves~\eqref{sci}. The identity~\eqref{sci} implies that we can,
   without loss of generality, take $I = [0, 1]^d$ and $B = 1$ in the
   proofs of Theorems~\ref{mth} and~\ref{fin}. 

  The second scaling identity is: For every $I = [a_1, b_1] \times
  \dots \times [a_d, b_d]$ and $J = [\alpha_1, \beta_1] \times \dots
  \times [\alpha_d, \beta_d]$ with $a_i < \alpha_i < \beta_i < b_i$
  for all $i$, we have
  \begin{equation}\label{sci2}
    M(\C_p(I, B), \epsilon, L_p(J)) \leq M(\C_p([0, 1]^d, 1),
    \epsilon/B, L_p[\eta, 1- \eta]^d) 
  \end{equation}
  where $\eta$ is defined as in~\eqref{edf}. The proof of~\eqref{sci2}
  is similar to that of~\eqref{sci} and is thus
  omitted. Identity~\eqref{sci2} allows us to take, without loss of
  generality, $I = [0, 1]^d$, $J = [\eta, 1-\eta]^d$ and $B = 1$ for
  the proof of Theorem~\ref{peq}. 

\subsection{Proof of Theorem~\ref{mth}}
Inequality~\eqref{mth.eq2} is a direct consequence of~\eqref{IEEE.eq1}
because
\begin{equation*}
  \C_{\infty}(I, B (b_1 -
  a_1)^{-1/p} \dots (b_d - a_d)^{-1/p}) \subseteq \C_p(I, B) \qt{for
    every $1 \leq p \leq \infty$}. 
\end{equation*}
We therefore only need to prove~\eqref{mth.eq1}.  We assume that $p <
\infty$ because the case when $p = \infty$ is taken care of by
Theorem~\ref{IEEE}.  The scaling inequality~\eqref{sci} allows us to
restrict attention to the case when $I = [0, 1]^d$ and $B =
1$. Therefore, we only need to prove the existence of a positive  
constant $c$ (depending only on $d, p$ and $q$) such that 
\begin{equation}\label{rk}
     \log M(\C_p([0, 1]^d, 1), \epsilon; L_q[0, 1]^d) \leq c \epsilon^{-d/2}
     \qt{for all $\epsilon > 0$}. 
\end{equation}

 Our first step for the proof of~\eqref{rk} is to reduce focus to the 
 $L_q$ metric on a subrectangle $[\eta, 1/2]^d$ of $[0, 1]^d$ for some
 $\eta > 0$ as opposed to the $L_q$ metric on the entire unit cube $[0,
 1]^d$.  

\subsubsection{Reduction to the $L_q[\eta, 1/2]^d$ metric for $0 < \eta < 1/2$} 
The behaviour of functions in $\C_p([0, 1]^d, 1)$ can be difficult to
control near the boundary of the cube $[0, 1]^d$. For this reason,
the metric entropy of $\C_p([0, 1]^d, 1)$ under the pseudo-metric
$L_q[\eta, 1-\eta]^d$ for $0 < \eta< 1/2$ will be easier to bound than
the metric entropy under $L_q[0, 1]^d$. The following lemma proves
that it is actually enough to work with $L_q[\eta, 1-\eta]^d$ for some
appropriately chosen $\eta$, $0 < \eta < 1/2$ depending on $\epsilon$. 
\begin{lemma}\label{wlg1}
Let 
\begin{equation*}
0 <\epsilon < 2^{1/q} d^{(1/q) - (1/p)} ~~ \text{ and } ~~
\eta_{\epsilon} := \frac{1}{2d} \left(\frac{\epsilon^q}{2}
\right)^{p/(p-q)}. 
\end{equation*}
Then 
\begin{equation}\label{wlg1.eq}
  M \left(\C_p([0, 1]^d, 1), \epsilon ; L_q[0, 1]^d  \right) \leq M
  \left(\C_p([0,1]^d, 1), \epsilon 2^{-1/q}; L_q[\eta_{\epsilon}, 1 -
    \eta_{\epsilon}]^d \right). 
\end{equation}
\end{lemma}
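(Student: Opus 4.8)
The plan is to build a cover of $\C_p([0,1]^d,1)$ in the $L_q[0,1]^d$ pseudometric out of a cover in the $L_q[\eta_\epsilon,1-\eta_\epsilon]^d$ pseudometric by extending the centers of the latter cover by zero, and to estimate the extra $L_q$-mass this creates on the boundary shell $S := [0,1]^d \setminus [\eta_\epsilon, 1-\eta_\epsilon]^d$ using the constraint $\int_{[0,1]^d}|\phi|^p \le 1$ together with H\"older's inequality (this is the only place the hypothesis $q < p$ enters).

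Concretely, let $N$ denote the right-hand side of \eqref{wlg1.eq}, and fix functions $g_1,\dots,g_N$ on $[\eta_\epsilon,1-\eta_\epsilon]^d$ whose closed $L_q[\eta_\epsilon,1-\eta_\epsilon]^d$-balls of radius $\epsilon 2^{-1/q}$ cover the restrictions of all $\phi \in \C_p([0,1]^d,1)$ to that sub-cube. Extend each $g_j$ to a function $\bar g_j$ on $[0,1]^d$ by setting $\bar g_j \equiv 0$ on $S$. I claim the closed $L_q[0,1]^d$-balls of radius $\epsilon$ around $\bar g_1,\dots,\bar g_N$ cover $\C_p([0,1]^d,1)$, which is exactly \eqref{wlg1.eq}. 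Indeed, given $\phi \in \C_p([0,1]^d,1)$, choose $j$ with $\int_{[\eta_\epsilon,1-\eta_\epsilon]^d}|\phi-g_j|^q \le \epsilon^q/2$; then $\int_{[0,1]^d}|\phi-\bar g_j|^q = \int_{[\eta_\epsilon,1-\eta_\epsilon]^d}|\phi-g_j|^q + \int_S |\phi|^q$, so it remains to bound $\int_S|\phi|^q$ by $\epsilon^q/2$.

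For that last estimate, $S$ is contained in the union of the $2d$ slabs $\{x : x_i < \eta_\epsilon\}$ and $\{x : x_i > 1-\eta_\epsilon\}$, $1 \le i \le d$, so $\mathrm{vol}(S) \le 2d\eta_\epsilon$; by H\"older's inequality with conjugate exponents $p/q$ and $p/(p-q)$, $\int_S |\phi|^q \le \left(\int_S|\phi|^p\right)^{q/p}\mathrm{vol}(S)^{1-q/p} \le (2d\eta_\epsilon)^{1-q/p}$, and the definition $\eta_\epsilon = \frac{1}{2d}(\epsilon^q/2)^{p/(p-q)}$ is chosen precisely so that $(2d\eta_\epsilon)^{1-q/p} = \epsilon^q/2$. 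Hence $\int_{[0,1]^d}|\phi-\bar g_j|^q \le \epsilon^q$, as wanted. Finally, the hypothesis $\epsilon < 2^{1/q}d^{1/q-1/p}$ is exactly the condition $\eta_\epsilon < 1/2$ (equivalently $\epsilon^q < 2d^{1-q/p}$), which guarantees that $[\eta_\epsilon, 1-\eta_\epsilon]^d$ is a nondegenerate sub-cube. There is no real obstacle here; the only things to watch are that the covering-number inequality runs in the right direction (extending by zero only adds $L_q$-mass, so $M$ on $[0,1]^d$ is dominated by $M$ on the sub-cube, not conversely) and that the exponent arithmetic in the H\"older estimate is matched exactly to the prescribed $\eta_\epsilon$.
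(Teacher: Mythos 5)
Your proof is correct and is essentially the paper's own argument: extend the centers of a cover in the $L_q[\eta_\epsilon,1-\eta_\epsilon]^d$ pseudometric by zero, split the $L_q[0,1]^d$ error into the sub-cube piece plus $\int_S|\phi|^q$, and control the latter by H\"older's inequality with exponents $p/q$ and $p/(p-q)$ together with $\mathrm{vol}(S)\le 2d\eta_\epsilon$ (the paper writes this volume as $1-(1-2\eta_\epsilon)^d$ before bounding it by $2d\eta_\epsilon$). Your exponent arithmetic showing $(2d\eta_\epsilon)^{1-q/p}=\epsilon^q/2$ and the equivalence of the hypothesis on $\epsilon$ with $\eta_\epsilon<1/2$ both check out.
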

\begin{proof}
Fix $\phi \in \C_p([0, 1]^d, 1)$  and an arbitrary function $\psi$ on
$[\eta_{\epsilon}, 1 - \eta_{\epsilon}]^d$ where $\eta_{\epsilon}$ is
defined as in the statement of the lemma. Extend $\psi$ to $[0, 1]^d$
by defining it to be zero outside $[\eta_{\epsilon}, 1 -
\eta_{\epsilon}]^d$. Observe that
\begin{equation}\label{wi}
  \int_{[0, 1]^d} |\phi - \psi|^q = \int_{[\eta_{\epsilon}, 1 -
    \eta_{\epsilon}]^d} |\phi - \psi|^q + \int_{[0, 1]^d} |\phi(x)|^q
  I\left\{x \notin [\eta_{\epsilon}, 1 - \eta_{\epsilon}]^d \right\} dx. 
\end{equation}
Applying Holder's inequality $\int |fg| \leq (\int |f|^r)^{1/r} (\int
|g|^s)^{1/s}$ with $f := |\phi|^q$, $g := I\{x \notin [\eta_{\epsilon},
1 - \eta_{\epsilon}]^d\}$, $r = p/q$ and $s = p/(p-q)$, we get 
\begin{align*}
  \int_{[0, 1]^d}  |\phi(x)|^q I \left\{x \notin [\eta_{\epsilon}, 1 -
    \eta_{\epsilon}]^d \right\} dx &\leq \left(\int_{[0, 1]^d}
    |\phi(x)|^p \right)^{q/p} \left(1 - (1 - 2 \eta_{\epsilon})^d
  \right)^{1-(q/p)} \\
&\leq \left(\int_{[0, 1]^d} |\phi(x)|^p \right)^{q/p} \left( 2 d
  \eta_{\epsilon} \right)^{1-(q/p)} \leq \frac{\epsilon^q}{2}. 
\end{align*}
This, together with~\eqref{wi}, gives 
\begin{equation*}
  \int_{[0, 1]^d} |\phi - \psi|^q \leq  \int_{[\eta_{\epsilon}, 1 -
    \eta_{\epsilon}]^d} |\phi - \psi|^q + \frac{\epsilon^q}{2}
\end{equation*}
from which~\eqref{wlg1.eq} follows immediately. 
\end{proof}
By symmetry, it can be shown that the metric entropy of $\C_p([0,
1]^d, 1)$ under $L_q[\eta, 1-\eta]^d$ is bounded from above by a
constant multiple of the metric entropy under $L_q[\eta, 1/2]^d$. This
is the content of the following lemma. 
\begin{lemma}\label{wlg2}
The following inequality holds for every $0 < \eta < 1/2$ 
\begin{equation}
  \label{wlg2.eq}
 \log M(\C_p([0, 1]^d, 1), \epsilon; L_q[\eta, 1-\eta]^d) \leq 2^d 
\log M(\C_p([0, 1]^d, 1), \epsilon 2^{-d/q}, L_q[\eta, 1/2]^d). 
\end{equation}
\end{lemma}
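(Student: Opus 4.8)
\textbf{Proof proposal for Lemma~\ref{wlg2}.}

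The plan is to exploit the symmetries of the cube $[0,1]^d$ that preserve the class $\C_p([0,1]^d,1)$ in order to reduce the $L_q$ metric on $[\eta, 1-\eta]^d$ to the $L_q$ metric on the smaller ``corner'' rectangle $[\eta, 1/2]^d$. The key observation is that for each coordinate $i$, the reflection $x_i \mapsto 1 - x_i$ maps $[0,1]^d$ to itself, maps convex functions to convex functions, and preserves the $L_p$ norm; hence precomposition with such a reflection is a bijection of $\C_p([0,1]^d,1)$ onto itself that is an isometry in every $L_q$ metric on a symmetric region. Composing reflections over subsets $S \subseteq \{1,\dots,d\}$ of the coordinates gives a group of $2^d$ such symmetries, and the images of $[\eta,1/2]^d$ under these symmetries are exactly the $2^d$ sub-rectangles obtained by choosing, in each coordinate, either $[\eta,1/2]$ or $[1/2, 1-\eta]$; these tile $[\eta, 1-\eta]^d$ (up to a measure-zero overlap on the hyperplanes $x_i = 1/2$).

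First I would fix $0 < \eta < 1/2$ and, for each $S \subseteq \{1,\dots,d\}$, let $R_S$ denote the rectangle which equals $[1/2, 1-\eta]$ in the coordinates $i \in S$ and $[\eta, 1/2]$ in the coordinates $i \notin S$, and let $\sigma_S: [0,1]^d \to [0,1]^d$ be the reflection that sends $x_i \mapsto 1-x_i$ for $i \in S$ and fixes the other coordinates; thus $\sigma_S(R_\emptyset) = R_S$ and $\bigcup_S R_S = [\eta, 1-\eta]^d$. Next I would invoke Lemma~\ref{rul} with the $2^d$ sets $R_S$: choosing $\epsilon_S := \epsilon \, 2^{-d/q}$ for each $S$ gives $\sum_S \epsilon_S^q = 2^d \cdot \epsilon^q 2^{-d} = \epsilon^q$, so Lemma~\ref{rul} yields
\begin{equation*}
  \log M(\C_p([0,1]^d, 1), \epsilon; L_q[\eta, 1-\eta]^d) \leq \sum_{S \subseteq \{1,\dots,d\}} \log M(\C_p([0,1]^d, 1), \epsilon 2^{-d/q}; L_q(R_S)).
\end{equation*}
Finally, for each $S$ I would argue that $\log M(\C_p([0,1]^d,1), \delta; L_q(R_S)) = \log M(\C_p([0,1]^d,1), \delta; L_q[\eta,1/2]^d)$: the map $\phi \mapsto \phi \circ \sigma_S$ is a bijection of $\C_p([0,1]^d,1)$ onto itself (convexity and the constraint $\int |\phi|^p \le 1$ are both preserved because $\sigma_S$ is an affine volume-preserving involution of the cube), and it carries the $L_q(R_S)$ pseudometric to the $L_q[\eta,1/2]^d$ pseudometric by the change of variables $y = \sigma_S(x)$; hence an optimal $\delta$-cover of one class transports to an optimal $\delta$-cover of the other. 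Summing the $2^d$ identical terms gives exactly~\eqref{wlg2.eq}.

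The only mildly delicate point — and the one I would be careful to state rather than grind through — is the verification that the $R_S$ genuinely cover $[\eta, 1-\eta]^d$ (they do, since every point with all coordinates in $[\eta, 1-\eta]$ has each coordinate in $[\eta,1/2]$ or $[1/2,1-\eta]$) and that the overlaps on the hyperplanes $\{x_i = 1/2\}$ are harmless: Lemma~\ref{rul} only requires $A \subseteq \bigcup A_i$, not disjointness, so no correction is needed. Everything else is bookkeeping: the reflection invariance of convexity and of the $L_p$ integral constraint is immediate, and the change-of-variables identity for the $L_q$ pseudometric is a one-line Jacobian computation (the Jacobian is $1$). Thus the proof is essentially a symmetrization argument wrapped around Lemma~\ref{rul}.
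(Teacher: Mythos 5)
Your proof is correct and follows essentially the same route as the paper: the same decomposition of $[\eta,1-\eta]^d$ into the $2^d$ rectangles indexed by choices of $[\eta,1/2]$ or $[1/2,1-\eta]$ in each coordinate, followed by Lemma~\ref{rul} with $\epsilon_S = \epsilon 2^{-d/q}$ and the coordinate-reflection symmetry to identify each factor with the count on $[\eta,1/2]^d$. The paper simply says ``by symmetry'' where you spell out the reflection argument explicitly, which is a fair and accurate elaboration.
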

\begin{proof}
  Let $I(0) := [\eta, 1/2]$ and $I(1) := [1/2, 1- \eta]$. For any pair
  of functions $\phi$ and $\psi$, observe that 
  \begin{equation*}
    \int_{[\eta, 1-\eta]^d} |\phi - \psi|^q = \sum_{\theta \in \{0,
      1\}^d} \int_{I(\theta_1) \times \dots \times I(\theta_d)} |\phi - \psi|^q 
  \end{equation*}
which implies, by Lemma~\ref{rul}, that 
\begin{equation*}
  M(\C_p([0, 1]^d, 1), \epsilon; L_q[\eta, 1-\eta]^d) \leq
  \prod_{\theta \in \{0, 1\}^d} M(\C_p([0, 1]^d, 1), \epsilon
  2^{-d/q}; L_q(I(\theta_1) \times \dots \times I(\theta_d))) . 
\end{equation*}
By symmetry, we get that
\begin{equation*}
  M(\C_p([0, 1]^d, 1), \epsilon
  2^{-d/q}; L_q(I(\theta_1) \times \dots \times I(\theta_d))) =
  M(\C_p([0, 1]^d, 1), \epsilon 2^{-d/q}, L_q[\eta, 1/2]^d)
\end{equation*}
for every $\theta \in \{0, 1\}^d$. This completes the proof
of~\eqref{wlg2.eq}.  
\end{proof}
The above pair of results (Lemma~\ref{wlg1} and~\ref{wlg2}) together
imply that~\eqref{rk} will be a consequence of the
following result: 
\begin{proposition}\label{kred}
  Fix $d \geq 1$ and $1 \leq q < p < \infty$. There exists positive
  constants $c$ and $\epsilon_0$ depending only on $d$, $p$ and $q$
  such that 
  \begin{equation*}
  \sup_{0 < \eta < 1/2}  \log M(\C_p([0, 1]^d, 1), \epsilon; L_q[\eta,
  1/2]^d) \leq c \epsilon^{-d/2}   \qt{for every $\epsilon \leq \epsilon_0$}.
\end{equation*}
\end{proposition}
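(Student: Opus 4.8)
The plan is to deduce Proposition~\ref{kred} from the Partitioning Theorem (Theorem~\ref{divis}) by making a judicious choice of the subdivision $\eta = \eta_0 < \eta_1 < \dots < \eta_l < u = 1/2 \le \eta_{l+1}$ and then optimizing. First I would invoke Theorem~\ref{divis} with $u = 1/2$, which gives, for any $0 < \eta < 1/2$ and any admissible sequence,
\begin{equation*}
  \log M(\C_p([0,1]^d,1), \epsilon; L_q[\eta,1/2]^d) \le c\,\epsilon^{-d/2}
  \left(\sum_{i=0}^l \frac{(\eta_{i+1}-\eta_i)^{d/(2q+d)}}{\eta_i^{dq/(p(2q+d))}}\right)^{d(2q+d)/(2q)}.
\end{equation*}
So the entire task reduces to showing that the inner sum
\[
  S := \sum_{i=0}^l \frac{(\eta_{i+1}-\eta_i)^{d/(2q+d)}}{\eta_i^{dq/(p(2q+d))}}
\]
can be bounded by a constant depending only on $d,p,q$ (and \emph{not} on $\eta$), provided a good geometric-type sequence is chosen. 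Since $q < p$, the exponent $dq/(p(2q+d))$ is strictly smaller than $d/(2q+d)$, and this gap is exactly what makes the sum summable down to $\eta \to 0$; this is the structural reason the metric entropy stays finite.

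The key computational step is to pick the $\eta_i$ as a geometric progression: set $\eta_i := \eta\, \lambda^i$ for a fixed ratio $\lambda > 1$ depending only on $d,p,q$, and let $l$ be the smallest integer with $\eta \lambda^{l+1} \ge 1/2$. Then $\eta_{i+1} - \eta_i = \eta\lambda^i(\lambda-1)$, so each summand equals $(\lambda-1)^{d/(2q+d)} (\eta\lambda^i)^{d/(2q+d) - dq/(p(2q+d))}$. Writing $\gamma := \tfrac{d}{2q+d}\bigl(1 - \tfrac{q}{p}\bigr) > 0$, the sum becomes $(\lambda-1)^{d/(2q+d)} \sum_{i=0}^l (\eta\lambda^i)^{\gamma} = (\lambda-1)^{d/(2q+d)} \eta^{\gamma} \sum_{i=0}^l \lambda^{i\gamma}$. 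The geometric sum $\sum_{i=0}^l \lambda^{i\gamma}$ is at most $\tfrac{\lambda^{(l+1)\gamma}}{\lambda^\gamma - 1}$, and since $\lambda^{l} \le 1/(2\eta)$ by choice of $l$, we get $\eta^\gamma \lambda^{l\gamma} \le (1/2)^\gamma$. Hence $S \le C(d,p,q)$, a constant independent of $\eta$, and raising to the power $d(2q+d)/(2q)$ keeps it a constant. Plugging back yields $\log M \le c' \epsilon^{-d/2}$ uniformly in $\eta$, which is exactly the claim (the restriction $\epsilon \le \epsilon_0$ is harmless and comes along for free, or can simply be dropped since the bound is monotone).

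One technical point to handle carefully: Theorem~\ref{divis} requires $l \ge 1$ and a strictly increasing sequence strictly below $u$, so when $\eta$ is very close to $1/2$ (so that the natural $l$ would be $0$) one must treat that regime separately — but there the domain $[\eta,1/2]^d$ has diameter bounded by a constant multiple of $(1/2-\eta)^{d/q}\le$ const, and in fact one can just apply Theorem~\ref{IEEE} directly on $[\eta,1/2]^d$ after noting (via Lemma~\ref{gbr}) that functions in $\C_p([0,1]^d,1)$ are uniformly bounded there by a constant, giving a bound of the form $c\epsilon^{-d/2}$ with no $\eta$-dependence.

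The main obstacle I anticipate is not any single estimate but rather book-keeping the exponents and verifying that the geometric ratio $\lambda$ can be chosen uniformly (i.e.\ that $\lambda^\gamma - 1$ stays bounded away from $0$ with a choice depending only on $d,p,q$) and that the edge cases — $l=0$, $u$ not exactly a power away from $\eta$, and the definition $u \le \eta_{l+1}$ versus $\eta\lambda^{l+1}$ possibly overshooting — are all absorbed into constants. Once the geometric sequence is in place, the inequality $q < p$ does all the real work, and the rest is a one-line geometric-series bound.
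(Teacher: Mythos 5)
Your proof is correct, and it follows the same overall strategy as the paper (reduce to Theorem~\ref{divis} and choose a partition making the inner sum $S$ bounded uniformly in $\eta$), but the execution differs in two genuine ways. First, the paper does not apply Theorem~\ref{divis} with $u=1/2$; it first fixes a small constant $u=u(d,p,q)$ as in~\eqref{expu}, splits $[\eta,1/2]^d$ into $[\eta,u\vee\eta]^d$ and $[u,1/2]^d$, handles the second piece directly via Lemma~\ref{gbr} and Theorem~\ref{IEEE}, and only partitions the first piece. Second — and this is the substantive difference — the paper's partition is the iterated-power sequence $\eta_i=\eta^{((p+q)/(2p))^i}$, and boundedness of $S$ is obtained by showing the summands $\zeta_i$ grow by a factor of at least $2$, so that $S\le 2\zeta_l\le 2u^{d(p-q)/(2p(2q+d))}$; this is where the specific constant $u$ is needed. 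Your plain geometric progression $\eta_i=\eta\lambda^i$ makes the sum an honest geometric series: with $\gamma=\tfrac{d}{2q+d}(1-\tfrac{q}{p})>0$ one gets
\begin{equation*}
  S \;=\;(\lambda-1)^{d/(2q+d)}\,\eta^{\gamma}\sum_{i=0}^{l}\lambda^{i\gamma}
  \;\le\;(\lambda-1)^{d/(2q+d)}\,\frac{(\eta\lambda^{l+1})^{\gamma}}{\lambda^{\gamma}-1}
  \;\le\;(\lambda-1)^{d/(2q+d)}\,\frac{(\lambda/2)^{\gamma}}{\lambda^{\gamma}-1},
\end{equation*}
a constant depending only on $d,p,q$ once $\lambda$ is fixed (say $\lambda=2$), so no auxiliary constant $u$ or two-region split is needed. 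Your treatment of the edge case $\eta\ge 1/(2\lambda)$ (where $l\ge 1$ is unavailable) by Lemma~\ref{gbr} plus Theorem~\ref{IEEE} is exactly right and plays the role of the paper's $[u,1/2]^d$ piece. The net effect is a shorter and arguably more transparent verification of the same proposition; what the paper's choice buys is nothing extra here (both yield $S\le C(d,p,q)$ and hence the $\epsilon^{-d/2}$ bound for all $\epsilon>0$, making the restriction $\epsilon\le\epsilon_0$ vacuous in either argument).
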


Proposition~\ref{kred} will be proved in the next
subsection. This will complete the proof of Theorem~\ref{mth} . 

\subsubsection{Proof of Proposition~\ref{kred}}\label{kpred}
Fix $p > q$ and let  
\begin{equation}\label{expu}
  u :=  \exp \left(\frac{-2p(p+q) (2q + d) \log 2}{d(p- q)^2} \right). 
\end{equation}
Note that $u$ only depends on $p, q$ and $d$ and that $0 < u <
1/2$. 

Using the notation $a \vee b := \max(a, b)$, we can write  
\begin{equation*}
  \int_{[\eta, 1/2]^d} |f - g|^q dx = \int_{[\eta, u \vee
    \eta]^d} |f - g|^q + \int_{[u \vee \eta, 1/2]^d} |f -
  g|^q \leq \int_{[\eta, u \vee
    \eta]^d} |f - g|^q + \int_{[u, 1/2]^d} |f -
  g|^q . 
\end{equation*}
for every pair of functions $f$ and $g$. Applying Lemma~\ref{rul},
we obtain
\begin{equation*}
  M(\C_p([0, 1]^d, 1), \epsilon; L_q[\eta, 1/2]^d) \leq M(\C_p([0,
  1]^d, 1), 2^{-1/q}\epsilon; L_q[\eta, u \vee \eta]^d) M(\C_p([0,
  1]^d, 1), 2^{-1/q}\epsilon; L_q[u, 1/2]^d). 
\end{equation*}
Now, by Lemma~\ref{gbr}, every function in $\C_p([0, 1]^d, 1)$, when
restricted to $[u, 1/2]^d$ is convex and uniformly bounded by $C
u^{-d/p}$ for a positive constant $C$ that only depends on $d$ and
$p$. It follows from Theorem~\ref{IEEE} (and the fact that $u$ is a
constant that only depends on $d, p$ and $q$) that there exists a
constant $c$ depending on $d, p$ and $q$ alone such
that  
\begin{equation*}
 \log M(\C_p([0, 1]^d, 1), 2^{-1/q}\epsilon; L_q[u, 1/2]^d) \leq c
 \epsilon^{-d/2} \qt{for all $\epsilon > 0$}. 
\end{equation*}
We deduce therefore that the proof of Proposition~\ref{kred} will be
complete if we prove the existence of a constant $c$ for which 
\begin{equation}\label{fred}
  \log M(\C_p([0, 1]^d, 1), \epsilon; L_q[\eta, u \vee
  \eta]^d) \leq c \epsilon^{-d/2}. 
\end{equation}
We prove~\eqref{fred} below. It is trivial when $\eta \geq u$ so we
assume below that $\eta < u$. By Theorem~\ref{divis}, there exists a
positive constant $c$ depending only on $d$, $p$ and $q$ such that 
\begin{equation}\label{wr}
  \log M(\C_p([0, 1]^d, 1), \epsilon, L_q[\eta, u]^d) \leq c
  \epsilon^{-d/2} \left(\sum_{i=0}^l \frac{(\eta_{i+1} -
      \eta_i)^{d/(2q+d)}}{\eta_i^{dq/(p(2q+d))}}
  \right)^{d(2q+d)/(2q)} . 
\end{equation}
We use this with 
\begin{equation*}
  \eta_i := \exp \left( \left(\frac{p+q}{2p} \right)^i \log \eta
  \right) \qt{for $i \geq 1$}
\end{equation*}
and $l$ taken to be the largest integer $i$ for which $\eta_i <
u$. Because $p > q$ and $\log \eta < 0$,  it is clear that
$\{\eta_i\}$ is an increasing sequence. 

We shall show below that for this choice of $l$ and $\{\eta_i\}$, 
\begin{equation*}
  S := \sum_{i=0}^l \frac{(\eta_{i+1} -
    \eta_i)^{d/(2q+d)}}{\eta_i^{dq/(p(2q + d))}} \leq C
\end{equation*}
for a positive constant $C$ that only depends on $d$, $p$ and
$q$. The proof would then be complete by~\eqref{wr}. 

Define 
\begin{equation*}
  \zeta_i := \frac{\eta_{i+1}^{d/(2q+d)}}{\eta_i^{dq/(p(2q+d))}} = \exp
  \left(\frac{d(p-q)}{2p(2q+d)} \left(\frac{p+q}{2p} \right)^i \log
    \eta \right). 
\end{equation*}
Observe that for $1 \leq i \leq l$
\begin{align*}
  \frac{\zeta_i}{\zeta_{i-1}} &= \exp \left(\frac{-d
      (p-q)^2}{4p^2(2q+d)} \left(\frac{p+q}{2p} \right)^{i-1} \log
    \eta \right) \\
&= \exp \left(\frac{-d
      (p-q)^2}{2p(p+q)(2q+d)} \log \eta_i \right) \geq \exp \left(\frac{-d
      (p-q)^2}{2p(p+q)(2q+d)} \log u \right) = 2
\end{align*}
where we have used $\eta_i < u$ for $1 \leq i \leq l$ and the
expression~\eqref{expu} for $u$. This means that $\zeta_i \leq
2(\zeta_i-\zeta_{i-1})$ for $i = 1, \dots, l$ and, as a result, we get 
\begin{equation*}
  S \leq \sum_{i=0}^l \zeta_i = \zeta_0 + 2 \sum_{i=1}^l (\zeta_i -
  \zeta_{i-1}) = 2 \zeta_l - \zeta_0 \leq 2 \zeta_l.  
\end{equation*}
$\zeta_l$ can be bounded by a constant independent of $\eta$ because
\begin{equation*}
  \zeta_l = \exp \left(\frac{d(p-q)}{2p(2q+d)} \left(\frac{p+q}{2p}
    \right)^l \log \eta \right) =\exp \left(\frac{d(p-q)}{2p(2q+d)}
    \log \eta_l \right) < \exp \left(\frac{d(p-q)}{2p(2q+d)} \log u
  \right). 
\end{equation*}
This proves that $S$ is bounded from above by a constant that only
depends on $d$, $p$ and $q$. This  completes the proof of
Proposition~\ref{kred} and thereby that of Theorem~\ref{mth}. 

\subsection{Proof of Theorem~\ref{fin}}

   Because of~\eqref{sci}, it is sufficient to prove the theorem for
   $a_1 = \dots = a_d = 0$, $b_1 = \dots = b_d = 1$ and $B =
   1$. Define, for $j \geq 1$,   
  \begin{equation*}
    f_j(x) := (1 + p)^{1/p} 2^{j/p} \max(0, 1 - 2^j x_1) \qt{for $x =
      (x_1, \dots, x_d) \in [0, 1]^d$}. 
  \end{equation*}
  It is easy to check that $f_j \in \C_p([0, 1]^d, 1)$ for every $j
  \geq 1$. Now for $j < k$, note that 
  \begin{equation*}
 (1 + p)^{-q/p} \int_{[0, 1]^d} |f_j(x) - f_k(x)|^q dx \geq \int_{2^{-k}}^{2^{-j}} 
    2^{jq/p} (1 - 2^j x_1)^q dx_1= \frac{2^{j(q-p)/p}}{q+1} \left(1 -
      2^{j-k} \right)^{q+1} \geq c 
  \end{equation*}
 for some positive constant $c$ that depends only on $d, p$ and
 $q$. We have thus shown that for every pair of distinct functions
 from the infinite sequence $\{f_j\}$, the $L_q$ distance between them
 is bounded from below by a positive constant that only depends on $p$
 and $q$. This proves the theorem. 

\subsection{Proof of Theorem~\ref{peq}}
By the second scaling identity~\eqref{sci2}, we can take $I = [0,
1]^d$, $J = [\eta, 1-\eta]^d$ and $B = 1$ without loss of
generality. Further, by Lemma~\ref{wlg2}, it is enough to bound $\log
M(\C_p([0, 1]^d, 1), \epsilon, L_p[\eta, 1/2]^d)$. 

Theorem~\ref{divis} with $p = q$ and $u = 1/2$ gives the existence of
a constant $c$ for which
\begin{equation}\label{kk}
  \log M(\C_p([0, 1]^d, 1), \epsilon, L_p[\eta, 1/2]^d) \leq c
  \epsilon^{-d/2} \left(\sum_{i=0}^l \frac{(\eta_{i+1} -
      \eta_i)^{d/(2p +d)}}{\eta_i^{d/(2p+d)}} \right)^{d(2p+d)/(2p)} 
\end{equation}
for every $l \geq 1$ and every $\eta = \eta_0 < \eta_1 < \dots <
\eta_l < 1/2 \leq \eta_{l+1}$. We apply this to $\eta_i = 2^i \eta$
for $i = 0, \dots, l$ where $l := \lfloor -\log (2 \eta)/\log 2 \rfloor$ 
and $\eta_{l+1} = 1/2$. Here $\lfloor x \rfloor$ is the largest
integer that is strictly smaller than $x$. It is clear then that
$\eta_{i+1} - \eta_i \leq \eta_i$ and then, using~\eqref{kk}, we get 
\begin{equation*}
  \log M(\C_p([0, 1]^d, 1), \epsilon, L_p[\eta, 1/2]^d) \leq c
  \epsilon^{-d/2} \left(l+1\right)^{d(2p+d)/(2p)}. 
\end{equation*}
Because
\begin{equation*}
  1 + l \leq 1 + \frac{\log (1/(2\eta))}{\log 2} = \frac{\log
    (1/\eta)}{\log 2}, 
\end{equation*}
 we deduce that
\begin{equation*}
  \log M(\C_p([0, 1]^d, 1), \epsilon, L_p[\eta, 1/2]^d) \leq c_1
  \epsilon^{-d/2} \left(\log \frac{1}{\eta} \right)^{d(2p+d)/(2p)}
\end{equation*}
where $c_1$ only depends on $d$, $p$ and $q$. This completes the proof
of Theorem~\ref{peq}. 

\section{Convex sets}\label{cc}
Recall the definition of the classes $\K_p(R)$ for $1 \leq p \leq
\infty$ and $d \geq 1$ from~\eqref{kpd}. Bronshtein~\cite[Theorem 3
and Remark 1]{Bronshtein76} proved the following bound on the covering
number of $\K_{\infty}(R)$ under the Hausdorff metric $\ell_H$. It may
be recalled that the Hausdorff distance between two compact, convex
sets $C$ and $D$ in Euclidean space is defined by 
\begin{equation}\label{hdef}
  \ell_H(C, D) := \max \left(\sup_{x \in C} \inf_{y \in D} |x - y|,
    \sup_{x \in D} \inf_{y \in C} |x - y| \right)
\end{equation}
where $|\cdot|$ denotes Euclidean distance. 
\begin{theorem}[Bronshtein]\label{bsh}
There exist positive constants $c_1$ and $c_2$ depending only on $d$
such that for every $R > 0$, 
  \begin{equation}\label{bsh.eq}
c_1 \left(\frac{R}{\epsilon} \right)^{(d-1)/2} \leq  \log
M(\K_{\infty}(R), \epsilon; \ell_H) \leq c_2 \left(\frac{R}{\epsilon}
\right)^{(d-1)/2}.  
  \end{equation}
\end{theorem}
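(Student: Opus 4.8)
The plan is to reduce to $R=1$ and then handle the two bounds separately. Since $\K_\infty(R)=R\,\K_\infty(1)$ and $\ell_H$ is positively homogeneous of degree $1$, one has $M(\K_\infty(R),\epsilon;\ell_H)=M(\K_\infty(1),\epsilon/R;\ell_H)$, so it suffices to produce dimensional constants with $c_1\,\epsilon^{-(d-1)/2}\le\log M(\K_\infty(1),\epsilon;\ell_H)\le c_2\,\epsilon^{-(d-1)/2}$ for $\epsilon$ below a dimensional threshold (the remaining range follows by monotonicity of $M(\cdot\,,\epsilon;\cdot)$ in $\epsilon$, noting $\log M=0$ once $\epsilon\ge1$).

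For the upper bound I would pass to support functions. Because $\ell_H(K,L)=\sup_{u\in S^{d-1}}|h_K(u)-h_L(u)|$ (\cite[Sec.~1.8]{Schneider}) and $\sup_u|h_K(u)|\le1$ on $\K_\infty(1)$, the task becomes to bound the $L_\infty(S^{d-1})$ covering number of $\{h_K:K\in\K_\infty(1)\}$. Each $h_K$ is $1$-Lipschitz on $S^{d-1}$ and --- this is what produces the square-root saving in the exponent --- is the restriction of a sublinear (hence convex) positively homogeneous function on $\R^d$. Covering $S^{d-1}$ by the $2d$ coordinate caps $\{u_j\ge1/\sqrt d\}$ and $\{u_j\le-1/\sqrt d\}$, and parametrizing $\{u_j\ge1/\sqrt d\}$ by $v\in\R^{d-1}$ via $u=(v,1)/|(v,1)|$ (with the $1$ in slot $j$), the function $v\mapsto h_K(v,1)$ is convex, $1$-Lipschitz and uniformly bounded on a ball of dimensional radius, and the chart is bi-Lipschitz between the two supremum distances. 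Hence $\log M(\K_\infty(1),\epsilon;\ell_H)$ is at most a dimensional multiple of the $L_\infty$ covering number, at a scale comparable to $\epsilon$, of the class of convex, uniformly bounded, uniformly Lipschitz functions of $d-1$ variables. (One cannot shortcut this with the $L_q$ bound of Theorem~\ref{IEEE}: upgrading $L_q$ control to $L_\infty$ control via the Lipschitz constant worsens the exponent, so a genuine sup-norm estimate is needed.)

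That remaining sup-norm covering estimate is the substantive core, and concretely I would obtain it via polytopal approximation (which also disposes of the set problem directly): every $K\in\K_\infty(1)$ lies within Hausdorff distance $\delta$ of a polytope with at most $c_d\,\delta^{-(d-1)/2}$ vertices in the unit ball, by the classical polytopal-approximation estimate for convex bodies, and one then counts such polytopes. The main obstacle is exactly this count. A one-shot estimate placing each of the $N=c_d\,\delta^{-(d-1)/2}$ vertices in a single global $\delta$-net of the ball gives only $\log M\lesssim\epsilon^{-(d-1)/2}\log(1/\epsilon)$, so the spurious logarithm must be removed; that is the heart of Bronshtein's argument \cite{Bronshtein76}. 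The remedy is to build the cover by successive refinement --- a coarse polytope on a coarse net, then corrections supported in the small regions trapped between the current polytope and $K$, hence cheap to encode --- so that the entropies telescope to $O(\epsilon^{-(d-1)/2})$ with no logarithmic loss; this successive-refinement idea is precisely the multistep approximation underlying the partitioning theorem of Section~\ref{divisec} and, originally, Dryanov's procedure \cite{Dryanov}.

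For the lower bound I would exhibit $2^N$ pairwise $\epsilon$-separated bodies with $N\ge c_d\,\epsilon^{-(d-1)/2}$. Fix a maximal $4\sqrt\epsilon$-separated set $u_1,\dots,u_N\subseteq S^{d-1}$, so that $N\ge c_d\,\epsilon^{-(d-1)/2}$, and for $\sigma\in\{0,1\}^N$ put $K_\sigma:=\mathrm{conv}\big((1-\epsilon)\,\overline B^d\cup\{u_i:\sigma_i=1\}\big)\subseteq\overline B^d$. If $\sigma\ne\sigma'$, choose $i$ with, say, $\sigma_i=1$ and $\sigma'_i=0$. Every $x\in K_{\sigma'}$ obeys $\langle x,u_i\rangle\le1-\epsilon$, being a convex combination of a point of $(1-\epsilon)\overline B^d$ and points $u_j$ with $j\ne i$, for which $\langle u_j,u_i\rangle\le\cos(4\sqrt\epsilon)\le1-\epsilon$ once $\epsilon$ is small; since $\langle u_i,u_i\rangle=1$, this forces $|u_i-x|\ge\epsilon$, and as $u_i\in K_\sigma$ we get $\ell_H(K_\sigma,K_{\sigma'})\ge\epsilon$. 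Thus $\K_\infty(1)$ carries $2^N$ points pairwise $\ge\epsilon$-separated in $\ell_H$, whence $\log M(\K_\infty(1),\epsilon/2;\ell_H)\ge N\log2\ge c\,\epsilon^{-(d-1)/2}$, which is the stated lower bound after renaming $\epsilon$. This part is routine; the only thing to check is the elementary bound $\cos(4\sqrt\epsilon)\le1-\epsilon$ for small $\epsilon$.
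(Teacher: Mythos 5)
The paper does not actually prove Theorem~\ref{bsh}: it is imported verbatim from Bronshtein \cite{Bronshtein76} and used as a black box (in particular in the proof of Theorem~\ref{sr}), so there is no in-paper argument to compare against. Judged on its own terms, your proposal is half a proof. The scaling reduction and the lower bound are complete and correct: a maximal $4\sqrt{\epsilon}$-separated subset of $S^{d-1}$ has $N\ge c_d\,\epsilon^{-(d-1)/2}$ points, for $x\in K_{\sigma'}$ one gets $\langle x,u_i\rangle\le 1-\epsilon$ (it is cleaner to use $|u_j-u_i|^2\ge 16\epsilon$, i.e.\ $\langle u_j,u_i\rangle\le 1-8\epsilon$, than to invoke $\cos(4\sqrt{\epsilon})$), hence $|u_i-x|\ge\langle u_i-x,\,u_i\rangle\ge\epsilon$ and the $2^N$ bodies are pairwise $\epsilon$-separated. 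This is essentially Bronshtein's own lower-bound construction, and it is fine. (Minor point: for very large $\epsilon$ the lower bound in \eqref{bsh.eq} cannot hold since $\log M=0$; the inequality is really asserted for $\epsilon\le\epsilon_0 R$, and monotonicity rescues only the upper bound, not the lower one.)

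The upper bound, however, is not proved. Your reduction to a sup-norm covering bound for convex, uniformly bounded, uniformly Lipschitz functions on charts of the sphere is correct and is the standard route, and you correctly diagnose both why the $L_q$ bound of Theorem~\ref{IEEE} cannot be upgraded to $L_\infty$ without losing in the exponent and why the one-shot polytope count costs a factor $\log(1/\epsilon)$. But the step that removes that logarithm --- the multiscale count of approximating polytopes --- is precisely the content of Bronshtein's theorem, and you describe it in one sentence without specifying what is encoded at each scale, why the number of admissible refinements at scale $\delta_k$ is $\exp(O(\delta_k^{-(d-1)/2}))$ with constants that sum, or how the scales telescope. As written, the upper bound has been reduced to a statement of essentially the same depth as the one to be proved. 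The analogy with Theorem~\ref{divis} is also not quite apt: the partitioning theorem of Section~\ref{divisec} decomposes the \emph{domain} spatially and applies a known uniform bound on each cell with a tuned accuracy, whereas what is needed here is successive refinement of a single approximant across \emph{scales}; these are different mechanisms, even if both avoid logarithmic losses through geometric decay. If your intent is to cite Bronshtein (or Bronshtein--Ivanov for the polytopal approximation plus Bronshtein for the count), that is legitimate --- it is exactly what the paper does --- but then the upper bound rests on a citation, not on the argument you have given.
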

In the next theorem, we show that the same result~\eqref{bsh.eq} holds
for the covering number of $\K_p(R)$ for every $1 \leq p \leq
\infty$. To the best of our knowledge, this result is new. Note that
the classes $\K_p(R)$ for $1 \leq p < \infty$ are all larger than
$\K_{\infty}(R)$. 
\begin{theorem}\label{sr}
There exist positive constants $c_1$ and $c_2$ depending only on $d$
such that for every $1 \leq p \leq \infty$ and $R > 0$, 
  \begin{equation*}
c_1 \left(\frac{R}{\epsilon} \right)^{(d-1)/2} \leq \log M(\K_{p}(R),
\epsilon; \ell_H) \leq  c_2 \left(\frac{R}{\epsilon} \right)^{(d-1)/2}. 
  \end{equation*}   
\end{theorem}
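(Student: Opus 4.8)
The plan is to reduce Theorem~\ref{sr} to Bronshtein's Theorem~\ref{bsh} by showing that the extra sets allowed in $\K_p(R)$ compared to $\K_\infty(R')$ (for a suitable $R'$) contribute negligibly to the covering number. The lower bound is immediate: since $\K_\infty(R) \subseteq \K_p(R)$ for every $1 \leq p \leq \infty$ (because the uniform bound $|h_K| \leq R$ forces $\int_{S^{d-1}} |h_K|^p \, d\nu \leq R^p$ as $\nu$ is a probability measure), we have $\log M(\K_p(R), \epsilon; \ell_H) \geq \log M(\K_\infty(R), \epsilon; \ell_H) \geq c_1 (R/\epsilon)^{(d-1)/2}$ by~\eqref{bsh.eq}. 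So the real content is the upper bound, and there it suffices to treat $p = 1$, since $\K_p(R) \subseteq \K_1(R)$ for all $p \geq 1$ by Jensen's inequality (the map $t \mapsto t^p$ is convex and $\nu$ is a probability measure, so $\int |h_K| \, d\nu \leq (\int |h_K|^p \, d\nu)^{1/p} \leq R$).

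**The key step** is a geometric observation: a convex body $K$ with $\int_{S^{d-1}} |h_K(u)| \, d\nu(u) \leq R$ cannot be "large" and, more importantly, after a bounded translation it is contained in a ball of radius $O(R)$. Concretely, if $K$ has small $L_1$-average support function, then $K$ cannot contain a large ball, nor can it be very elongated, because the support function of a set of diameter $D$ is at least $D/2$ on a spherical cap of $\nu$-measure bounded below by a dimensional constant (take $u$ in the direction of a diameter segment and a cap around it). Hence $\mathrm{diam}(K) \leq C_d R$, and translating $K$ so that (say) its Steiner point or incenter sits at the origin, we get $K - x_K \subseteq B(0, C_d R)$, i.e., $K - x_K \in \K_\infty(C_d R)$. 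Since the Hausdorff metric is translation-invariant, the covering problem for $\K_1(R)$ is the same as the covering problem for $\{K - x_K : K \in \K_1(R)\}$, which is a subset of $\K_\infty(C_d R)$. Therefore $\log M(\K_1(R), \epsilon; \ell_H) \leq \log M(\K_\infty(C_d R), \epsilon; \ell_H) \leq c_2' (R/\epsilon)^{(d-1)/2}$ by~\eqref{bsh.eq}, absorbing the constant $C_d^{(d-1)/2}$ into $c_2'$.

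**The main obstacle** I anticipate is making the "bounded translation into a ball of radius $O(R)$" step fully rigorous — in particular, pinning down the right center $x_K$ and the clean inequality relating $\mathrm{diam}(K)$ (or the circumradius) to $\int |h_K| \, d\nu$. One has to be a little careful because $h_K$ can be negative (if the origin is outside $K$), so the naive bound "$h_K \geq 0$ everywhere" fails; this is exactly why one should first translate $K$ to contain the origin (e.g. translate so $0 \in K$, which only changes $h_K$ by a linear functional and can only be arranged to keep the $L_1$ norm controlled up to a constant, or better: bound the diameter directly, which is translation-invariant, and then translate). The cleanest route is: (i) show $\mathrm{diam}(K) \leq C_d R$ using the cap argument on the difference body or directly on $h_K(u) + h_K(-u) = $ width in direction $u$, since $\int (h_K(u) + h_K(-u)) \, d\nu(u) = 2\int h_K \, d\nu \leq 2R$ regardless of sign, and the width in the diameter direction is $\geq \mathrm{diam}(K)$ on a cap of positive measure; (ii) translate $K$ into $B(0, \mathrm{diam}(K)) \subseteq B(0, C_d R)$; (iii) invoke Theorem~\ref{bsh}. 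Once the diameter bound is in hand the rest is bookkeeping with dimensional constants.
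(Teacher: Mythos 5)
Your lower bound and the Jensen reduction of the upper bound to the case $p=1$ are both correct, and your overall strategy (show that every set in $\K_1(R)$ is, up to a dimensional constant, an element of $\K_{\infty}(C_d R)$ and then invoke Theorem~\ref{bsh}) is the right one. The gap is in how you try to establish this inclusion. Your step (i) bounds only the \emph{diameter} of $K$, via the width $h_K(u)+h_K(-u)$, which is a translation-invariant quantity and therefore cannot tell you where $K$ sits; a singleton $\{z\}$ has diameter zero for any $z$, and the class of all convex bodies of diameter at most $C_d R$ has infinite $\epsilon$-covering number under $\ell_H$ because it contains singletons arbitrarily far apart. Your step (ii) then recenters each $K$ by its own translation $x_K$, and the appeal to translation invariance of $\ell_H$ to identify the covering problems for $\K_1(R)$ and $\{K-x_K\}$ is invalid: $\ell_H(K+t,L+t)=\ell_H(K,L)$ only for a \emph{common} translation $t$, whereas here $K$ and $L$ receive different shifts, so an $\epsilon$-cover of the recentered family does not pull back to an $\epsilon$-cover of $\K_1(R)$. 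To repair this you would have to cover the set of centers $\{x_K\}$ as well, and for that you need to know that this set is bounded --- which is precisely the location information the diameter bound does not supply.

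The missing ingredient is a direct location bound, and it follows from the cap computation you already use, applied to a point of $K$ rather than to the width: for $x\in K$ with $v:=x/\|x\|$ one has $h_K(u)\geq x\cdot u\geq \|x\|/2$ for all $u$ in the cap $\scs(v)=\{u\in S^{d-1}: u\cdot v\geq 1/2\}$, whence $R\geq \int_{\scs(v)}|h_K|\,d\nu\geq \tfrac{1}{2}\|x\|\,\nu(\scs(v))$ and so $\|x\|\leq 2\nu(\scs(v))^{-1}R$. This gives $\K_1(R)\subseteq \K_{\infty}(C_d R)$ outright, with no translation and no diameter bound needed, and Theorem~\ref{bsh} then finishes the proof. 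This is exactly the paper's argument, carried out there for general $p$ (yielding $\|x\|\leq 2\nu(\scs(v))^{-1/p}R$, which is uniform over $p\geq 1$, so your reduction to $p=1$ is a convenience rather than a necessity).
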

\begin{proof}
$p = \infty$ corresponds to Theorem~\ref{bsh} so we may assume that $1
\leq p < \infty$. Because $\K_{\infty}(R) \subseteq \K_{p}(R)$, the
lower bound on $M(\K_p(R), \epsilon; \ell_H)$ follows from
Theorem~\ref{bsh}. We therefore only need to prove the upper bound. 
We show below that there exists a positive constant $M$ depending
only on $d$ and $p$ such that 
\begin{equation}\label{aap}
\K_p(R) \subseteq \K_{\infty}(M R)
\end{equation}
This means that $M(\K_p(R), \epsilon; \ell_H) \leq M(\K_{\infty}(MR),
\epsilon; \ell_H)$. The proof will then be complete by the use of
Theorem~\ref{bsh}. 
 
For each $v \in S^{d-1}$, define the spherical cap $\scs(v) :=
\left\{x \in S^{d-1}: ||x - v||^2 \leq 1 \right\}$. It is easy to
check that $\scs(v)$ can also be written as $\left\{x \in S^{d-1} : x
  \cdot v \geq 1/2 \right\}$.   

To prove~\eqref{aap}, fix $K \in \K_p(R)$ and $x \in K$. We need to
show that $x \in M R$ for a constant $M$ which only depends on $d$ and 
$p$. We may clearly assume that $x \neq 0$ and let $v :=
x/||x||$.  Note that for every $u \in S^{d-1}$, we have $h_K(u) \geq x
\cdot u = ||x|| (v \cdot u)$. Consequently, $h_K(u) \geq ||x||/2$
whenever $u \in S(v)$. As a result, 
\begin{equation*}
  R^p \geq \int_{S^{d-1}} \left|h_K(u) \right|^p d\nu(u) \geq
  \int_{\scs(v)} \left|h_K(u) \right|^p d\nu(u) \geq 2^{-p} ||x||^p
  \nu \left(\scs(v) \right)
\end{equation*}
which implies that $||x|| \leq 2 \nu(\scs(v))^{-1/p} R$. The quantity
$\nu(\scs(v))$ only depends on the dimension $d$ which completes the
proof. 
\end{proof}

\section{Appendix: Proof of Lemma~\ref{gbr}}\label{apx}
In this section, we provide the proof of Lemma~\ref{gbr} which was
crucially used in the proof of Theorem~\ref{divis}. Before we get to
the proof of Theorem~\ref{divis}, let us first state and prove a
technical result which we then use to prove Theorem~\ref{divis}. 
\begin{lemma}\label{fidi}
  Suppose $f$ is a continuous convex function on $[0, a]$ with $f(0)
  < 0$. Then for every $\alpha > 0$ and $p > 0$, we have 
  \begin{equation*}
    \int_0^{a} x^{\alpha - 1} |f(x)|^p dx \geq C(\alpha, p)|f(0)|^p
    a^{\alpha} 
  \end{equation*}
  where $C(\alpha, p)$ is the positive constant given by 
  \begin{equation}\label{fidi.c}
    C(\alpha, p) := \inf_{0 \leq \beta \leq 1} \int_0^1 u^{\alpha - 1}
    |u - \beta|^p du. 
  \end{equation}
\end{lemma}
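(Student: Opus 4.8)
The plan is to normalize by scaling and then to reduce the estimate to a pointwise domination of the form $|f(u)|\ge|u-\beta|$ for a suitable $\beta$. First I would substitute $x=au$ to get $\int_0^a x^{\alpha-1}|f(x)|^p\,dx=a^{\alpha}\int_0^1 u^{\alpha-1}|f(au)|^p\,du$, and then replace $f(a\,\cdot)$ by $f(a\,\cdot)/|f(0)|$, which is a continuous convex function on $[0,1]$ equal to $-1$ at $0$. Hence it suffices to show that every continuous convex $f$ on $[0,1]$ with $f(0)=-1$ satisfies $\int_0^1 u^{\alpha-1}|f(u)|^p\,du\ge C(\alpha,p)$.

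The heart of the argument is the claim that there exists $\beta\in(0,1]$ with $|f(u)|\ge|u-\beta|$ for all $u\in[0,1]$. To see this, set $\beta:=\sup\{u\in[0,1]:f(u)\le 0\}$; since $f$ is continuous with $f(0)=-1$ and $\{f\le 0\}$ is a closed interval containing $0$ (being a sublevel set of a convex function), we have $\beta\in(0,1]$, $f\le 0$ on $[0,\beta]$, and $f(\beta)\le 0$ (indeed $f(\beta)=0$ when $\beta<1$). On $[0,\beta]$, convexity gives $f(u)\le\frac{\beta-u}{\beta}f(0)+\frac{u}{\beta}f(\beta)\le-\frac{\beta-u}{\beta}$, so $|f(u)|=-f(u)\ge\frac{\beta-u}{\beta}\ge\beta-u=|u-\beta|$ because $\beta\le 1$. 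If $\beta<1$, then on $[\beta,1]$ the chord of $f$ over $[0,\beta]$ has slope $(f(\beta)-f(0))/\beta=1/\beta\ge 1$, so by convexity the right derivative satisfies $f_+'(\beta)\ge 1/\beta\ge 1$, whence $f(u)\ge f(\beta)+f_+'(\beta)(u-\beta)\ge u-\beta$, and thus $|f(u)|\ge|u-\beta|$ there as well. This proves the claim, and therefore $\int_0^1 u^{\alpha-1}|f(u)|^p\,du\ge\int_0^1 u^{\alpha-1}|u-\beta|^p\,du\ge C(\alpha,p)$, which combined with the scaling reduction yields the lemma.

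It remains to record that $C(\alpha,p)>0$. The map $\beta\mapsto\int_0^1 u^{\alpha-1}|u-\beta|^p\,du$ is continuous on the compact set $[0,1]$ (the singularity at $u=0$ is integrable since $\alpha>0$, so dominated convergence applies) and is strictly positive for every $\beta$, since its integrand is positive for almost every $u$; hence the infimum in~\eqref{fidi.c} is attained and positive. I expect the main obstacle to be establishing the bound $|f(u)|\ge|u-\beta|$ on the part $[\beta,1]$ where $f$ is positive: it requires transferring slope information across the point $\beta$, using that the chord slope of $f$ over $[0,\beta]$ is at least $1/\beta\ge 1$ and bounds the one-sided derivative of $f$ at $\beta$ by convexity. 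Everything else is routine scaling and a compactness argument.
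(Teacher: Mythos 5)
Your proof is correct and follows essentially the same route as the paper: normalize, locate the zero crossing $\beta$ of $f$, and use convexity on each side of $\beta$ to dominate $|f(u)|$ below by $|u-\beta|$, reducing the integral to $C(\alpha,p)$ (your use of the right derivative $f_+'(\beta)\ge 1/\beta$ on $[\beta,1]$ is just a reformulation of the paper's three-point convexity inequality through $f(a\beta)=0$). Your explicit verification that $C(\alpha,p)>0$ via continuity in $\beta$ and compactness is a welcome detail the paper leaves implicit.
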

\begin{proof}
  Suppose first that $f(a) \leq 0$. By convexity, we have
  \begin{equation*}
    f(x) \leq \frac{x}{a} f(a) + \left( 1 - \frac{x}{a} \right) f(0)
    \leq \left(1 - \frac{x}{a} \right) f(0)
  \end{equation*}
  and so we have $|f(x)| \geq (1 - (x/a)) |f(0)|$. As a consequence, 
  \begin{equation}\label{sc}
    \int_0^a x^{\alpha - 1} |f(x)|^p dx \geq |f(0)|^p \int_0^a x^{\alpha - 1}
    \left(1 - \frac{x}{a} \right)^p dx. 
  \end{equation}
  Now let $f(a) > 0$. By continuity, there exists $\beta \in (0, 1)$
  with $f(a\beta) = 0$. For $0 \leq x \leq a\beta$, we have by
  convexity 
  \begin{equation*}
    f(x) \leq \frac{x}{a\beta} f(a\beta) + \left( 1 - \frac{x}{a\beta}
    \right) f(0) = \left(1 - \frac{x}{a\beta} \right) f(0)
  \end{equation*}
which implies that
\begin{equation}\label{k2}
  \int_0^{a\beta} x^{\alpha - 1} |f(x)|^p dx \geq |f(0)|^p \int_0^{a\beta}
  x^{\alpha - 1} \left|1 - \frac{x}{a\beta} \right|^p dx. 
\end{equation}
On the other hand, for $a\beta \leq x \leq a$, we have, again by
convexity, 
\begin{equation*}
0 = f(a\beta) \leq \frac{a\beta}{x} f(x) + (1 - \frac{a\beta}{x}) f(0)
\end{equation*}
which gives 
\begin{equation}\label{k1}
  \int_{a\beta}^a x^{\alpha - 1} |f(x)|^p dx \geq |f(0)|^p \int_{a\beta}^a
  x^{\alpha - 1} \left|1 - \frac{x}{a\beta} \right|^p dx.
\end{equation}
Combining~\eqref{k1} and~\eqref{k2}, we obtain
\begin{equation*}
  \int_0^a x^{\alpha - 1} |f(x)|^p dx \geq |f(0)|^p  \int_0^{a} x^{\alpha
    - 1} \left|1 - \frac{x}{a\beta} \right|^p dx = |f(0)|^p a^{-p}
  \beta^{-p} \int_0^a x^{\alpha - 1} \left|x - a\beta \right|^p dx. 
\end{equation*}
Because $\beta < 1$, we get
\begin{equation*}
  \int_0^a x^{\alpha - 1} |f(x)|^p dx \geq |f(0)|^p a^{-p} \int_0^a 
  x^{\alpha - 1} |x - a\beta|^p dx.  
\end{equation*}
By the change of variable $x = a u$ and noting that $0 < \beta < 1$ is
arbitrary, we obtain 
\begin{equation*}
  \int_0^a x^{\alpha - 1} |f(x)|^p dx \geq |f(0)|^p a^{\alpha} \inf_{0
  \leq \beta \leq 1} \int_0^1 u^{\alpha - 1} |u - \beta|^p du.   
\end{equation*}
Because of~\eqref{sc}, this inequality also holds when $f(a) \leq
0$. This completes the proof.  
\end{proof}

We are now ready to prove Lemma~\ref{gbr}. 
\begin{proof}[Proof of Lemma~\ref{gbr}] 
  It is clear that, without loss of generality, we only need to
  prove~\eqref{gbr.eq} when $1/2 \leq y_i < 1$ for all $i = 1, \dots,
  n$. The hypothesis on $\phi$ implies that 
  \begin{equation*}
    \int_{y_1}^1 \dots \int_{y_d}^1 |\phi(x)|^p dx_1 \dots dx_n \leq
    1. 
  \end{equation*}
  We shall write the integral above in polar coordinates. Let 
  \begin{equation*}
    x_1 = y_1 + r \cos \theta_1, ~ x_2 = y_2 + r \sin \theta_1 \cos \theta_2,
    \dots, ~x_d = y_d + r \sin \theta_1 \dots \sin \theta_{d-2} \sin
    \theta_{d-1}.  
  \end{equation*}
Then 
  \begin{equation}\label{nam}
    \int_{\theta \in \Theta} \int_0^{r_{\theta}} |g(r, \theta)|^p
    r^{d-1} \sin^{d-2} \theta_1 \dots \sin \theta_{d-2} ~ dr
    d\theta_1 \dots d\theta_{d-1} =   \int_{y_1}^1 \dots \int_{y_d}^1
    |\phi(x)|^p dx \leq 1.  
  \end{equation}
for some set $\Theta$ with 
  \begin{equation*}
    g(r, \theta) := \phi(y_1 + r \cos \theta_1, ~ y_2
    + r \sin \theta_1 \cos \theta_2, \dots, ~y_n + r \sin
    \theta_1 \dots \sin \theta_{d-2} \sin \theta_{d-1}) 
  \end{equation*}
and 
\begin{equation*}
  r_{\theta} :=  \min \left(\frac{1-y_1}{\cos \theta},
    \frac{1-y_2}{\sin \theta_1 \cos \theta_2}, \dots,
    \frac{1-y_d}{\sin \theta_1 \dots \sin \theta_{d-2} \sin
      \theta_{d-1}} \right).  
\end{equation*}

Because of the convexity of $\phi$, the function $r \mapsto g(r,
\theta)$ is clearly convex on $(0, r_{\theta})$. Thus by
Lemma~\ref{fidi}, we obtain that for every $\theta \in \Theta$, 
\begin{equation*}
  \int_0^{r_{\theta}} |g(r, \theta)|^p r^{d-1} dr \geq C(d, p) |g(0,
  \theta)|^p r_{\theta}^d = d C(d, p) |\phi(y)|^p \int_0^{r_{\theta}}
  r^{d-1} dr  
\end{equation*}
where $C(d, p)$ is defined as in~\eqref{fidi.c}. We thus obtain
from~\eqref{nam} that 
\begin{equation*}
1 \geq d C(d, p) |\phi(y)|^p \int_{\theta \in \Theta}
\int_0^{r_{\theta}} r^{d-1}  \sin^{d-2} \theta_1 \dots \sin \theta_{d-2}~ dr
d\theta_1 \dots d\theta_{d-1}
\end{equation*}
Converting the above integral back to the regular coordinates, we get 
\begin{equation*}
  1 \geq d C(d, p) |\phi(y)|^p \int_{y_1}^1 \dots \int_{y_n}^1 dy_1
  \dots dy_d = d C(d, p) |\phi(y)|^p (1 - y_1) \dots (1 - y_d). 
\end{equation*}
This proves~\eqref{gbr.eq} with $c := d^{-1/p} C(d, p)^{-1/p}$. 
\end{proof}

\bf{Acknowledgement: } The author is sincerely thankful to the
anonymous referee whose comments led to an improvement of the paper.

\bibliographystyle{plainnat}
\bibliography{AG}

\end{document}